\documentclass[11pt]{article}

\usepackage[T1]{fontenc}
\usepackage[utf8]{inputenc}
\usepackage{newpxtext}
\usepackage{iftex}
\ifPDFTeX
  \usepackage[activate={true,nocompatibility},final]{microtype}
\else
  \usepackage[protrusion=true,final]{microtype}
\fi

\usepackage[margin=1in]{geometry}

\usepackage{authblk}

\usepackage{amsmath}
\usepackage{amssymb,amsfonts}
\usepackage{newpxmath}
\usepackage{mathtools}
\usepackage{bm}
\usepackage{nicefrac}
\allowdisplaybreaks

\usepackage{amsthm}

\usepackage{graphicx}
\graphicspath{{figures/}{./}}
\usepackage{booktabs}
\usepackage{tabularx}
\usepackage{multirow,makecell,array}
\usepackage{float}
\usepackage[font=small,labelfont=bf]{caption}
\usepackage{subcaption}
\usepackage{algorithm}
\usepackage{algpseudocode}

\usepackage{enumitem}
\setlist[itemize]{leftmargin=2.2em,itemsep=2pt,topsep=2pt}
\setlist[enumerate]{leftmargin=2.2em,itemsep=2pt,topsep=2pt}

\usepackage{xcolor}
\definecolor{LinkColor}{rgb}{0.10,0.40,0.75}
\definecolor{CiteColor}{rgb}{0.70,0.25,0.20}
\definecolor{UrlColor} {rgb}{0.20,0.50,0.50}
\definecolor{TodoColor}{rgb}{0.80,0.30,0.10}
\definecolor{HeroBlue}{HTML}{246B89}
\definecolor{HeroRust}{HTML}{B45A3C}
\definecolor{HeroInk}{HTML}{24323D}
\definecolor{HeroMist}{HTML}{F4F7F8}

\usepackage{tikz}
\usetikzlibrary{positioning,calc,arrows.meta,matrix,fit}

\usepackage[numbers,square,sort,comma]{natbib}

\usepackage{url}
\usepackage{hyperref}
\hypersetup{
  colorlinks=true,
  linkcolor=LinkColor,
  citecolor=CiteColor,
  urlcolor=UrlColor,
  breaklinks=true,
  bookmarksnumbered=true,
}
\usepackage{bookmark}

\numberwithin{equation}{section}

\newif\ifdraft \draftfalse
\ifdraft
  \usepackage{lineno}\linenumbers
  \newcommand{\todo}[1]{\textcolor{TodoColor}{\textbf{[TODO:}~#1\textbf{]}}}
\else
  \newcommand{\todo}[1]{}
\fi

\newcommand{\N}{\mathbb{N}}

\newcommand{\E}{\mathbb{E}}
 
\renewcommand{\P}{\mathbb{P}}
\newcommand{\mc}[1]{\mathcal{#1}}

\newcommand{\defeq}{\coloneqq}

\DeclareMathOperator{\poly}{poly}

\DeclarePairedDelimiterX{\inner}[2]{\langle}{\rangle}{#1,#2}

\newcommand{\Cstar}{\mathcal{C}^{*}}
\newcommand{\Lb}{L_{b}}
\newcommand{\tuple}[1]{\langle #1 \rangle}
\DeclareMathOperator{\Cdim}{Cdim}
\DeclareMathOperator{\cl}{cl}
\DeclareMathOperator{\negl}{negl}
\DeclareMathOperator{\mist}{mist}

\theoremstyle{plain}
\newtheorem{theorem}{Theorem}[section]
\newtheorem{proposition}[theorem]{Proposition}
\newtheorem{lemma}[theorem]{Lemma}

\newtheorem{fact}[theorem]{Fact}

\theoremstyle{definition}
\newtheorem{definition}[theorem]{Definition}

\newtheorem{example}[theorem]{Example}

\theoremstyle{remark}
\newtheorem{remark}[theorem]{Remark}

\usepackage{thm-restate}
\usepackage[capitalise,nameinlink,noabbrev,sort&compress]{cleveref}

\crefname{section}{Section}{Sections}             \Crefname{section}{Section}{Sections}
\crefname{subsection}{Section}{Sections}          \Crefname{subsection}{Section}{Sections}
\crefname{equation}{Equation}{Equations}          \Crefname{equation}{Equation}{Equations}
\crefname{figure}{Figure}{Figures}                \Crefname{figure}{Figure}{Figures}
\crefname{table}{Table}{Tables}                   \Crefname{table}{Table}{Tables}
\crefname{theorem}{Theorem}{Theorems}             \Crefname{theorem}{Theorem}{Theorems}
\crefname{proposition}{Proposition}{Propositions}
\Crefname{proposition}{Proposition}{Propositions}
\crefname{lemma}{Lemma}{Lemmas}                   \Crefname{lemma}{Lemma}{Lemmas}
\crefname{corollary}{Corollary}{Corollaries}      \Crefname{corollary}{Corollary}{Corollaries}
\crefname{conjecture}{Conjecture}{Conjectures}    \Crefname{conjecture}{Conjecture}{Conjectures}
\crefname{fact}{Fact}{Facts}                      \Crefname{fact}{Fact}{Facts}
\crefname{definition}{Definition}{Definitions}    \Crefname{definition}{Definition}{Definitions}
\crefname{assumption}{Assumption}{Assumptions}    \Crefname{assumption}{Assumption}{Assumptions}
\crefname{example}{Example}{Examples}             \Crefname{example}{Example}{Examples}
\crefname{problem}{Problem}{Problems}             \Crefname{problem}{Problem}{Problems}
\crefname{remark}{Remark}{Remarks}                \Crefname{remark}{Remark}{Remarks}
\crefname{claim}{Claim}{Claims}                   \Crefname{claim}{Claim}{Claims}
\crefname{algorithm}{Algorithm}{Algorithms}       \Crefname{algorithm}{Algorithm}{Algorithms}

\AddToHook{env/theorem/begin}{\crefalias{section}{theorem}}
\AddToHook{env/proposition/begin}{\crefalias{theorem}{proposition}}
\AddToHook{env/lemma/begin}{\crefalias{theorem}{lemma}}
\AddToHook{env/corollary/begin}{\crefalias{theorem}{corollary}}
\AddToHook{env/conjecture/begin}{\crefalias{theorem}{conjecture}}
\AddToHook{env/fact/begin}{\crefalias{theorem}{fact}}
\AddToHook{env/definition/begin}{\crefalias{theorem}{definition}}
\AddToHook{env/assumption/begin}{\crefalias{theorem}{assumption}}
\AddToHook{env/example/begin}{\crefalias{theorem}{example}}
\AddToHook{env/problem/begin}{\crefalias{theorem}{problem}}
\AddToHook{env/remark/begin}{\crefalias{theorem}{remark}}
\AddToHook{env/claim/begin}{\crefalias{theorem}{claim}}

\title{On Computational Hardness of Mistake-Bounded Language Generation: A Random-Oracle Query Separation}

\newlength{\affiliationinstitutionwidth}
\newlength{\affiliationemailwidth}
\author{%
\makebox[\textwidth][c]{%
  Xiaoyu Li\textsuperscript{1}\qquad
  Andi Han\textsuperscript{2}\qquad
  Dai Shi\textsuperscript{3}\qquad
  Jiaojiao Jiang\textsuperscript{1}\qquad
  Junbin Gao\textsuperscript{2}}\\
\makebox[\textwidth][c]{%
  \parbox[t]{\affiliationinstitutionwidth}{%
    \fontsize{9}{11}\selectfont\normalfont\raggedright
    \makebox[\linewidth][l]{\strut\textsuperscript{1}University of New South Wales}\par
    \makebox[\linewidth][l]{\strut\textsuperscript{2}University of Sydney}\par
    \makebox[\linewidth][l]{\strut\textsuperscript{3}University of Cambridge}\par}%
  \hspace{0.30in}%
  \parbox[t]{\affiliationemailwidth}{%
    \fontsize{9.4}{11}\selectfont\ttfamily\raggedright
    \makebox[\linewidth][l]{\strut\{xiaoyu.li2,jiaojiao.jiang\}@unsw.edu.au}\par
    \makebox[\linewidth][l]{\strut\{andi.han,junbin.gao\}@sydney.edu.au}\par
    \makebox[\linewidth][l]{\strut ds2213@cam.ac.uk}\par}}%
}
\date{}

\hypersetup{
  pdftitle={On Computational Hardness of Mistake-Bounded Language Generation: A Random-Oracle Query Separation},
  pdfauthor={Xiaoyu Li, Andi Han, Dai Shi, Jiaojiao Jiang, Junbin Gao}
}

\begin{document}
\maketitle
\vspace{-3.5em}

\begin{abstract}
Generation in the limit guarantees eventual generation for every countable collection of infinite languages in the model of Kleinberg and Mullainathan~\citep{kleinberg2024generation}, while closure dimension characterizes stronger information-theoretic guarantees~\citep{liramantewari2025generation}. Neither restricts per-output computation.  The cumulative-mistake objective in mistake-bounded generation makes finite failure prefixes quantitative~\citep{kleinbergpealereingold2026mistake}, and a per-output query budget exposes their computational source.  Polynomial-time algorithms are known for parities, conjunctions, and monotone functions with polynomially many maxterms~\citep{jimenezetal2026polytime}. We ask whether information-theoretic ease can coexist with bounded-access computational hardness. Relative to a random oracle $H$, we answer yes by constructing a countable collection $\Cstar$ of infinite languages with $\Cdim(\Cstar)=0$.  Almost surely on the same $H$, an unbounded generator makes zero mistakes on every target and every complete distinct enumeration.  Yet, writing $\lambda$ for the target-seed length, every fixed uniform generator $G$ with $\poly(\lambda,i)$ oracle queries at output $i$ has a constant $c_G>0$ such that, for every sufficiently large $\lambda$, some target incurs more than $2^{c_G\lambda}$ expected mistakes within its first $2(\lceil2^{c_G\lambda}\rceil+1)$ canonical outputs.  Infinite accidental agreement enables exhaustive search; sparse queries hide fresh target values.  Thus, in the random-oracle model, zero-mistake information-theoretic generation coexists with a generator-dependent exponential lower bound on worst-case expected mistakes under polynomial-query access.

\end{abstract}

\begin{figure}[H]
\centering
\vspace{.25em}
\resizebox{0.94\linewidth}{!}{%
\begin{tikzpicture}[
  x=1cm,
  y=1cm,
  easybadge/.style={
    rounded corners=2.8pt,
    draw=HeroBlue!72,
    fill=white,
    fill opacity=.95,
    text opacity=1,
    text width=4.80cm,
    minimum height=.88cm,
    inner sep=2.5pt,
    align=center,
    font=\scriptsize,
    text=HeroInk
  },
  hardbadge/.style={
    rounded corners=2.8pt,
    draw=HeroRust!72,
    fill=white,
    fill opacity=.95,
    text opacity=1,
    text width=4.90cm,
    minimum height=.88cm,
    inner sep=2.5pt,
    align=center,
    font=\scriptsize,
    text=HeroInk
  }
]
\path[use as bounding box] (0,0) rectangle (16.05,6.10);
\node[anchor=south west, inner sep=0pt] at (0,0)
  {\includegraphics[width=16.05cm]{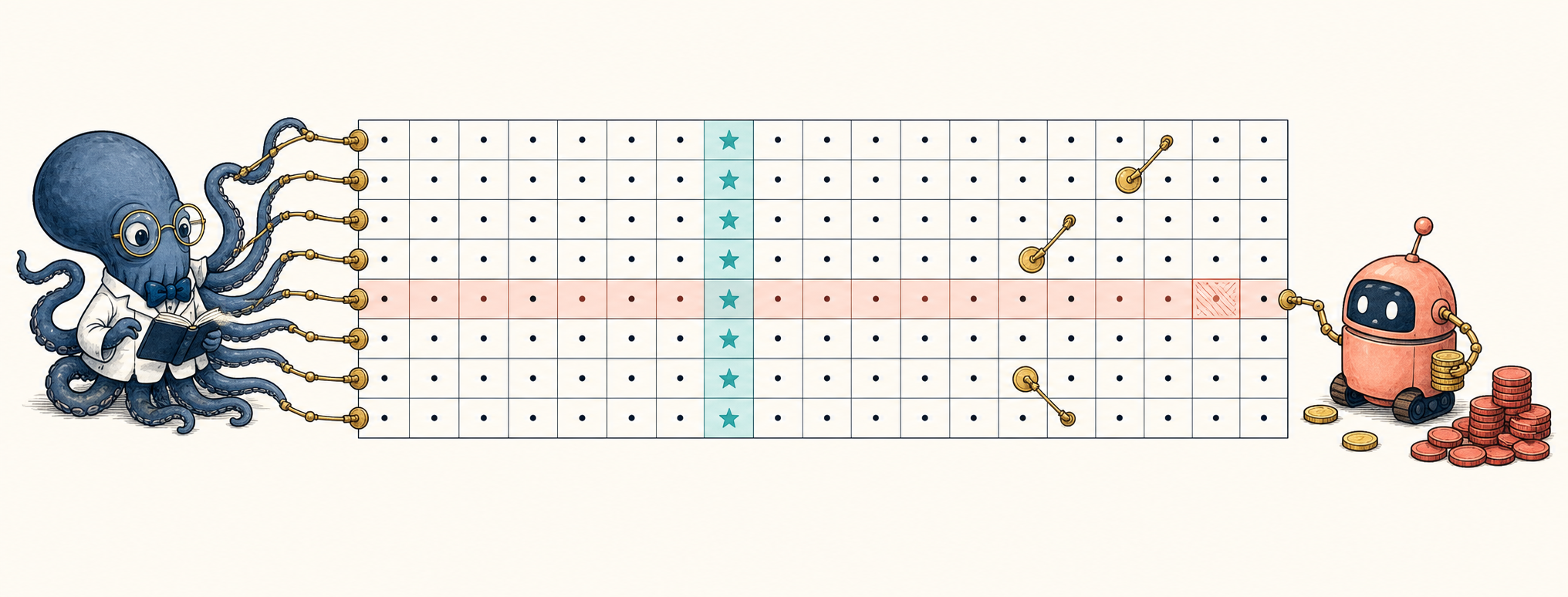}};

\node[font=\footnotesize, text=HeroInk, text width=4.65cm, align=center]
  at (5.32,5.70) {%
    \textbf{UNBOUNDED SEARCH}\\[-1pt]
    {\scriptsize\color{HeroInk!70} all-row scan finds\\recurring agreement}%
  };

\node[font=\footnotesize, text=HeroInk, text width=5.05cm, align=center]
  at (10.78,5.70) {%
    \textbf{POLYNOMIAL-QUERY ACCESS}\\[-1pt]
    {\scriptsize\color{HeroInk!70} sparse probes leave fresh target values hidden}%
  };

\node[easybadge] at (5.40,.50) {%
  {\footnotesize\bfseries\color{HeroBlue} $0$ CUMULATIVE MISTAKES}\\[-1pt]
  every target;\\[-2pt]
  every complete enumeration%
};

\node[hardbadge] at (10.70,.50) {%
  {\footnotesize\bfseries\color{HeroRust} $>2^{c_G\lambda}$ EXPECTED MISTAKES}\\[-1pt]
  some target;\\[-2pt]
  all sufficiently large $\lambda$%
};
\end{tikzpicture}
}

\captionsetup{font=footnotesize}
\caption{\textbf{One oracle graph, two access regimes.}
Unbounded search finds recurring all-row agreement; polynomial-query access can miss fresh target values.
The resulting zero-versus-exponential mistake separation is formalized in Theorem~\ref{thm:main}.}
\label[figure]{fig:separation-at-a-glance}
\end{figure}

\clearpage
\tableofcontents
\clearpage
\section{Introduction}
\label[section]{sec:intro}

Language generation has two logically distinct tasks: valid unseen outputs must exist, and a generator must be able to find them.  For countable collections, the first task is sharply understood at the information-theoretic level.  In the generation-in-the-limit model of Kleinberg and Mullainathan~\citep{kleinberg2024generation}, every countable collection of infinite languages is generatable, even though classical identification from positive data can be impossible~\citep{gold1967language,angluin1980inductive}; closure dimension characterizes stronger generation guarantees~\citep{liramantewari2025generation}.  These results deliberately leave the search for the next valid output computationally unrestricted.

That freedom matters.  A generator that errs for its first $K$ rounds and is thereafter always valid succeeds in generation in the limit whether $K=10$ or $K=10^{10^{10}}$.  The cumulative-mistake objective records $\min\{K,W\}$ failures by horizon $W$~\citep{kleinbergpealereingold2026mistake}. Mistakes are not units of computation, but under a per-output query budget an unsuccessful search for a fresh target point becomes visible as a mistake.  Thus the objective exposes computational burn-in without asserting eventual failure.  Polynomial-time algorithms are known for several natural finite-language families~\citep{jimenezetal2026polytime}.  What is missing in the feedback-free cumulative-mistake setting is a separation between unrestricted generation and bounded per-output access.

We ask whether information-theoretic ease can coexist with strong query hardness: can an unbounded generator be perfect while every query-efficient generator incurs exponentially many mistakes?  Relative to a random oracle, the answer is yes.

\paragraph{One object, two consequences.} Fix a random oracle $H$.  For a seed $b\in\{0,1\}^{\lambda}$, write
\[
  E(b,i)=\tuple{\lambda,i,H(b,i)},
  \qquad
  L_b=\bigl\{E(b,i):i\in\N\bigr\},
\]
and let $\Cstar$ contain these graph languages over all lengths and seeds.  A finite history leaves finitely many consistent seeds, which accidentally share a value at infinitely many fresh indices; for two seeds, each such agreement has probability $2^{-\lambda}$.  An unbounded closure search can use this infinite tail.  A bounded-query generator, however, cannot identify a uniformly hidden seed, so its value at a fresh index is essentially a blind $\lambda$-bit guess.

\paragraph{Main result.} In the security-parameterized model of \cref{sec:prelim}, almost surely over a random oracle $H$, both of the following hold for $\Cstar$:
\begin{itemize}[leftmargin=1.8em]
\item \textbf{Information-theoretic ease.}\enspace $\Cdim(\Cstar)=0$, and a computationally unbounded generator makes \emph{zero} mistakes on every target $\Lb$ under every complete distinct enumeration.
\item \textbf{Computational hardness.}\enspace Every uniform polynomial-query oracle generator $G$ has $c_G>0$ such that, at every sufficiently large length $\lambda$, some length-$\lambda$ target forces $>2^{c_G\lambda}$ expected mistakes (over $G$'s coins) within a generator-dependent exponential horizon under its canonical enumeration.
\end{itemize}

The exact horizon and quantifiers appear in \cref{thm:as-exp,thm:main}; \cref{sec:beyond-canonical} extends the lower bound to target--oracle-independent precommitted schedules. The all-row agreement used at the cold start removes only the possible first mistake; the exponential lower bound comes instead from the difficulty of predicting a fresh target value.

\subsection{Technique overview}
\label[subsection]{subsec:technique-overview}

\paragraph{Unbounded search on the common graph.} At the cold start the unbounded generator knows only $\lambda$, so it scans all $2^\lambda$ length-$\lambda$ rows until they share a value.  After examples arrive, it scans the common graph of the finitely many consistent seeds for an unrevealed point.  Both searches terminate almost surely: the first uses lengthwise agreement, the second the infinite-closure property~(S2).  Every output is valid, although the search may be arbitrarily expensive.

\paragraph{Planting neutralizes adaptive seed tests.} Given samples $\{E(b,i):i\in Q\}$ for a finite $Q$ independent of $(b,H)$, plant their values at the hidden row $b$ of an otherwise independent oracle.  In the resulting shadow execution, a $T$-query algorithm tests a $b$-independent set of at most $T$ seeds, hitting the true row with probability at most $T2^{-\lambda}$.  Without such a hit, a fresh target value remains a uniform $\lambda$-bit guess, adding at most $2^{-\lambda}$.  This coupling proves \cref{lem:s3-formal} even for adaptive algorithms.

\paragraph{A dense band of correct outputs yields a predictor.} If $G$ makes at most $m$ expected mistakes among its first $W=2(m+1)$ outputs, a uniformly random output in that window is a non-mistake with probability at least $1/2$.  By the canonical graph property, stopping there produces a fresh-point predictor.  If $q_G(\lambda,i)\le r_G(\lambda)(i+1)^d$, then with
\[
  m_c=\lceil2^{c\lambda}\rceil,
  \qquad
  W_c=2(m_c+1),
\]
the stopped simulation uses at most $W_c r_G(\lambda)(W_c+1)^d=2^{c(d+1)\lambda+o(\lambda)}$ queries.  For every $0<c<1/(d+1)$, the planted bound therefore makes the low-mistake seed--oracle pairs $2^{-\Omega(\lambda)}$-rare.

\paragraph{One oracle works against every uniform generator.} For fixed $G$, let $f_\lambda(H)$ be the fraction of low-mistake seeds.  The joint estimate bounds $\E_H[f_\lambda]$ exponentially; Markov's inequality and the first Borel--Cantelli lemma then give $f_\lambda(H)<\lambda^{-2}<1$ at every sufficiently large length for almost every $H$.  Countability of uniform generators makes this simultaneous, and intersecting with the easiness event puts both sides on one oracle.

\subsection{Computational context}

The nearest models are the feedback-free cumulative-mistake objective \citep{kleinbergpealereingold2026mistake} and its finite length-$n$ computational specialization \citep{jimenezetal2026polytime}.  We retain the score, but index infinite languages by a security parameter and charge random-oracle queries per output while leaving non-oracle running time unrestricted.  The theorem therefore isolates the bounded-access axis: a query separation in an adapted infinite-language model.

Other resource-sensitive results concern sample-complexity barriers for formal-language classes \citep{arenasetal2025complexity}, membership-query limitations \citep{charikarpabbaraju2025facets}, and computability obstructions for representative generation \citep{pealeramanreingold2025representative}.  Limit-learning models may instead enrich the observations with time-bound information \citep{papazovflammarion2025learning} or succinct machine-independent traces \citep{charikarkleinbergpabbaraju2026succinct}.  Those models change the learner's information; ours charges per-output random-oracle queries and scores cumulative mistakes.  For a thesis-length treatment of generation in the limit, see \citep{mehrotra2026learning}.

Along the objective axis, consistency and full breadth can conflict \citep{kalavasismehrotravelegkas2025limits}, while density quantitatively relaxes their tradeoff \citep{kleinbergwei2025density}.  The deterministic $1/2$ optimum extends to partial enumeration, scaled by the revealed subset's density \citep{kleinbergwei2026partial}.

On this density objective, recent work gives a simpler optimal deterministic algorithm, raises the optimal randomized guarantee to $1-1/e$ against oblivious adversaries, and shows that the corresponding optima hold simultaneously for every order in any finite collection \citep{caietal2026dense}.  Other objectives permit infinitely many mistakes when their frequency vanishes \citep{straussbutoicotterell2026hallucinations}, while asymptotic hallucination rates yield a hierarchy beyond finite mistake bounds~\citep{panigrahiweizhang2026hallucination}.  The closest comparisons appear in \cref{tab:comparisons}.

\begin{table}[H]
\centering
\small
\caption{Adjacent results organized by objective and resource perspective.}
\label[table]{tab:comparisons}
{\renewcommand{\arraystretch}{1.07}
\begin{tabularx}{\textwidth}{@{}
  >{\raggedright\arraybackslash}p{.16\textwidth}
  >{\raggedright\arraybackslash}p{.29\textwidth}
  >{\raggedright\arraybackslash}X@{}}
\toprule
Reference & Objective/model & Resource perspective or result \\
\midrule
\citep{arenasetal2025complexity} & generation in the limit for formal-language classes & sample complexity versus description size; triple-exponential regular-language and noncomputable context-free barriers \\
\citep{kleinbergpealereingold2026mistake} & mistake-bounded generation & information-theoretic mistake bounds \\
\citep{jimenezetal2026polytime} & finite length-$n$ mistake-bounded generation & polynomial-time algorithms for natural finite languages \\
\citep{flammarionetal2026spaceefficient,kleinbergetal2026boundedmemory} & generation in the limit & space and memory restrictions, lower bounds, and limitations under different objectives \\
\citep{hannekeetal2026feedback} & generation with mistake or query feedback & feedback access and its power \\
\citep{bhattamishra2025nsphardness} & next-symbol prediction of regular languages & standard-model cryptographic hardness for prediction \\
\midrule
\textbf{This paper} & mistake-bounded generation, no feedback & per-output random-oracle queries; zero-versus-exponential mistake separation \\
\bottomrule
\end{tabularx}
}
\end{table}

\paragraph{Scope and open directions.} The random-oracle model isolates the black-box barrier but does not guarantee a standard-model instantiation~\citep{canettigoldreichhalevi2004random}.  Finite-block pseudorandom error-correcting codes provide relevant ingredients~\citep{christgunn2024pseudorandom}, but the required infinite-index object and hardness for natural language classes remain open.

\paragraph{Organization.} \Cref{sec:prelim} defines the model, illustrates the cumulative-mistake objective, introduces the random-oracle graph collection, and records the probability tools used later. \Cref{sec:results} proves the closure, prediction, finite-horizon, and almost-sure statements. \Cref{sec:beyond-canonical} extends the lower bound to target--oracle-independent precommitted schedules.  \Cref{sec:discussion} develops the computational implications and the routes toward standard-model and natural-class separations.

\section{Model and preliminaries}
\label[section]{sec:prelim}

We now formalize the model used throughout: the feedback-free cumulative-mistake score is applied to an infinite-language collection indexed by a security parameter, and computational access is measured by per-output oracle queries.

\subsection{Security-parameterized mistake-bounded generation}

\begin{definition}[Generation game]\label[definition]{def:model}
For every $\lambda\in\N$, let $\mc L_\lambda$ be a collection of infinite languages over a countable universe $\mc X_\lambda$.  A single generator $G$ interacts with an adversary that fixes a target $L^*\in\mc L_\lambda$ and a complete, distinct enumeration $\sigma=(x_1,x_2,\ldots)$ of $L^*$:
\begin{enumerate}
\item $G$ receives $1^\lambda$ and outputs $a_0=G(1^\lambda,\varnothing)$;
\item at round $t\ge1$, the adversary reveals $x_t$, after which $G$ outputs $a_t=G(1^\lambda,x_1,\ldots,x_t)$.
\end{enumerate}
We view $G$ as one interactive procedure: it may retain state across rounds and uses one private random tape for the entire interaction.  On every oracle on which a procedure is asserted to be a generator, each requested output computation must halt on every valid game history. The generator receives no indication of whether any output is valid.  It may emit any string; an invalid or already revealed output counts as a mistake.  We count
\[
  \mist(G,L^*,\sigma)
  \defeq \sum_{t\ge0}\mathbf 1\!\left[a_t\notin L^*\setminus\{x_1,\ldots,x_t\}\right],
\]
where the revealed set is empty at $t=0$.  For a horizon $W$, write $\mist_{<W}$ for the same sum over $0\le t<W$.  If $G$ is randomized, bounds refer to expectation over its internal coins.  Our lower bound uses the fixed canonical enumeration, so it also holds when the adversary is allowed to adapt.
\end{definition}

Thus mistakes are invalid or stale outputs under the displayed score.  We do not separately penalize repetition of the generator's own prior outputs unless that output has subsequently been revealed.

\begin{definition}[Uniform oracle generator]\label[definition]{def:eff}
A \emph{uniform oracle generator} is a single probabilistic oracle Turing machine, independent of $\lambda$, the target, and the realized oracle.  Each requested output computation halts on every valid history, for every oracle realization and every internal random tape.
\end{definition}

\begin{definition}[Polynomial-query access]\label[definition]{def:query-cost}
A uniform oracle generator $G$ is \emph{polynomial-query} if there is a fixed polynomial $q_G$ such that, while computing its $i$th output, it makes at most $q_G(\lambda,i)$ queries to $H$.  The cap holds on every execution, uniformly over valid histories, oracle realizations, and internal coins; it is not an expected bound.
\end{definition}

We measure oracle queries per output rather than cumulatively and impose no bound on non-oracle running time.  Thus every polynomial-time generator is polynomial-query.

Here ``uniform'' means one oracle Turing machine across all lengths and targets, not the target-uniform rates of uniform generation~\citep{liramantewari2025generation}; within the latter non-uniform regime, Pareto-optimality captures tradeoffs among language-dependent generation times \citep{charikarpabbaraju2025pareto}.  Target-aware access to the seed is outside our model; $P/\poly$-style advice is instead shared by all targets of one length and is not analyzed.  Finally, unlike mistake-bound online learning, the generator receives no correctness feedback; related computability separations study a different prediction game \citep{hasratibendavid2023computable}.

\begin{example}[A finite but hidden burn-in]\label[example]{ex:finite-burn-in}
Fix one realized run and let $M_t$ be its mistake indicator at round $t$.  If, for some $K\ge0$,
\[
  M_t=
  \begin{cases}
    1,&0\le t<K,\\
    0,&t\ge K.
  \end{cases}
  \qquad\text{Then}\qquad
  \mist_{<W}=\min\{K,W\},
  \quad
  \mist=K.
\]
Generation in the limit records only that $K<\infty$, whereas the finite-horizon score records how many failures have occurred by round $W$.  Because the game is feedback-free, the transition point $K$ is an ex post property of the run, not information revealed to the generator.
\end{example}

Mistakes and queries remain distinct coordinates: runs with the same $K$ may use different numbers of queries, while \cref{alg:zero-mistake} has $K=0$ despite potentially unbounded query cost.  The lower bound links these coordinates under polynomial-query access; it does not identify a mistake with a unit of computation.

\subsection{Closure dimension}

\begin{definition}[Closure and closure dimension {\citep[Definition~3.1]{liramantewari2025generation}}]\label[definition]{def:closure-dimension}
For a collection $\mc H$ over a countable universe and a finite set $F$, let $\mc V(F)=\{L\in\mc H:F\subseteq L\}$ be its version space and define
\[
  \cl(F)=
  \begin{cases}
    \displaystyle\bigcap_{L\in\mc V(F)}L,&\mc V(F)\ne\varnothing,\\[2mm]
    \bot,&\mc V(F)=\varnothing.
  \end{cases}
\]
The \emph{closure dimension} $\Cdim(\mc H)$ is the largest $d$ for which some $d$ distinct examples have a finite, nonempty closure; it is zero if no such $d\ge1$ exists and infinite if no largest $d$ exists.
\end{definition}

We use only the definition of closure dimension from that work, not its closure characterization or closure-generator bound: \cref{lem:construction} proves $\Cdim(\Cstar)=0$ directly, and \cref{lem:easy} directly constructs the zero-mistake generator.

\subsection{Random oracle and graph collection}

\begin{definition}[Random-function graph collection]\label[definition]{def:object}
Let $\N=\{1,2,\ldots\}$.  Let $H$ be a uniformly random function that maps each input $(b,i)$, with $b\in\{0,1\}^{\lambda}$ and $i\in\N$, to a uniform value in $\{0,1\}^{\lambda}$.  All parties have oracle access to $H$.  For $b\in\{0,1\}^{\lambda}$ define
\[
  E(b,i)=\tuple{\lambda,i,H(b,i)},\qquad
  \Lb=\{E(b,i):i\in\N\},\qquad
  \Cstar=\bigcup_{\lambda\ge1}\{\Lb:b\in\{0,1\}^{\lambda}\}.
\]
The tuple encoding is self-delimiting and $i$ is written in binary.  Consequently every element carries its length tag, different lengths never collide, and exactly one string with fields $(\lambda,i,\cdot)$ belongs to a fixed $\Lb$.  Given $b$ and a candidate $x$, membership in $\Lb$ uses one oracle query and $\poly(|b|+|x|)$ time.  Each $\Lb$ is infinite. For the game of \cref{def:model}, we instantiate $\mc L_\lambda=\{\Lb:b\in\{0,1\}^{\lambda}\}$ and take $\mc X_\lambda$ to be the countable universe of well-formed tag-$\lambda$ tuples.
\end{definition}

\begin{example}[A finite oracle window]\label[example]{ex:oracle-window}
Take $\lambda=2$, so the four seeds index all four length-$2$ languages.  The following is one possible finite restriction of the oracle; each entry is the raw value $H(b,i)$, while the corresponding language element is the tagged point $\tuple{2,i,H(b,i)}$.

\begin{center}
\begin{tikzpicture}[
  tablecell/.style={draw=black!28, anchor=center, minimum width=1.16cm, minimum height=.48cm,
    inner sep=1pt, font=\small},
  observed/.style={fill=CiteColor!12, draw=CiteColor!75, line width=.7pt},
  closurecell/.style={fill=LinkColor!10, draw=LinkColor!75, dashed, line width=.7pt},
  universal/.style={fill=black!10, draw=black!70, line width=.9pt}
]
\matrix (oracle) [
  matrix of math nodes,
  ampersand replacement=\&,
  nodes={tablecell},
  row sep=-\pgflinewidth,
  column sep=-\pgflinewidth,
  row 1/.style={nodes={fill=black!6, font=\small\bfseries}},
  column 1/.style={nodes={fill=black!6, font=\small\bfseries}}
] {
  b\backslash i \& 1 \& 2 \& 3 \& 4 \& \cdots \\
  00 \& |[observed]|10 \& |[closurecell]|01 \& |[universal]|11 \& 00 \& \cdots \\
  01 \& |[observed]|10 \& |[closurecell]|01 \& |[universal]|11 \& 10 \& \cdots \\
  10 \& 00 \& 10 \& |[universal]|11 \& 01 \& \cdots \\
  11 \& 11 \& 00 \& |[universal]|11 \& 10 \& \cdots \\
};

\end{tikzpicture}

\vspace{0.35em}

\begin{tikzpicture}[
  observed/.style={fill=CiteColor!12, draw=CiteColor!75, line width=.7pt},
  closurecell/.style={fill=LinkColor!10, draw=LinkColor!75, dashed, line width=.7pt},
  universal/.style={fill=black!10, draw=black!70, line width=.9pt}
]
\node[observed, minimum width=.34cm, minimum height=.24cm] (legobs) {};
\node[right=1mm of legobs, draw=none, font=\scriptsize] (legobstext) {observed point $F$};
\node[closurecell, minimum width=.34cm, minimum height=.24cm, right=7mm of legobstext]
  (legclosure) {};
\node[right=1mm of legclosure, draw=none, font=\scriptsize] (legclosuretext)
  {fresh point in $\cl(F)$};
\node[universal, minimum width=.34cm, minimum height=.24cm, right=7mm of legclosuretext]
  (legall) {};
\node[right=1mm of legall, draw=none, font=\scriptsize] {valid for all targets};
\end{tikzpicture}
\end{center}

The observation $F=\{\tuple{2,1,10}\}$ leaves exactly $\mc V(F)=\{L_{00},L_{01}\}$, and their next agreement gives the fresh closure point $\tuple{2,2,01}\in\cl(F)$.  By contrast, all four rows agree at index $3$, so $\tuple{2,3,11}\in\bigcap_{b\in\{0,1\}^{2}}L_b$ is valid even at the empty history.

This finite window only illustrates the two agreement mechanisms.  For $k$ fixed consistent seeds, agreement at any fresh index has probability $2^{-(k-1)\lambda}$; \cref{lem:construction} uses the infinitely many independent indices to prove that such agreements recur infinitely often almost surely.
\end{example}

\subsection{Standard probability tools}
\label[subsection]{sec:probability-tools}

For events $E_1,E_2,\ldots$, write $\limsup_n E_n=\bigcap_{N\ge1}\bigcup_{n\ge N}E_n$ for the event that infinitely many $E_n$ occur.

The following standard facts hold on arbitrary probability spaces.

\begin{fact}[Markov's inequality]\label[fact]{fact:markov}
If $X\ge0$ is measurable and $a>0$, then $\P[X\ge a]\le \E[X]/a$.
\end{fact}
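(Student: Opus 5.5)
The plan is to derive Markov's inequality from monotonicity of the integral applied to a pointwise indicator bound. Let $(\Omega,\mathcal F,\P)$ be the underlying probability space, $X\ge0$ measurable, and $a>0$. The event $A=\{X\ge a\}$ is measurable because $X$ is. The key step is the pointwise inequality
\[
  a\,\mathbf 1_A \le X \qquad\text{on }\Omega.
\]
We check it in two cases. On $A$ the left side equals $a$, which is at most $X$ by definition of $A$. Off $A$ the left side equals $0$, which is at most $X$ by nonnegativity.

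Both sides are nonnegative measurable functions, so their expectations are defined in $[0,\infty]$, with no integrability assumption on $X$ needed. Monotonicity of the Lebesgue integral for nonnegative functions then gives $a\,\P[A]=\E[a\mathbf 1_A]\le\E[X]$.

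Dividing by $a>0$ yields $\P[X\ge a]\le\E[X]/a$. If $\E[X]=\infty$, the right side is $+\infty$ and the bound holds trivially, so this case needs no separate treatment beyond noting the convention.

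There is no real obstacle here. The only care points are measurability of $A$ and allowing $\E[X]=+\infty$, and both are handled by working with nonnegative extended-real integrals. Later in the paper the inequality will be applied to the low-mistake seed fraction $f_\lambda(H)\in[0,1]$, where these subtleties do not arise.
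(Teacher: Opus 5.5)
Your proof is correct: the pointwise bound $a\,\mathbf 1_{\{X\ge a\}}\le X$ together with monotonicity of the integral for nonnegative functions is the standard argument. It also handles measurability and the case $\E[X]=+\infty$ properly. The paper gives no proof and simply records this as a standard fact valid on arbitrary probability spaces, so your argument supplies exactly the textbook justification it relies on.
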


\begin{fact}[Tonelli's theorem]\label[fact]{fact:tonelli}
If $Z\ge0$ is measurable on a product probability space, then
\[
  \E_{\omega_1,\omega_2}[Z]
  =\E_{\omega_1}\!\left[\E_{\omega_2}[Z\mid\omega_1]\right]
  =\E_{\omega_2}\!\left[\E_{\omega_1}[Z\mid\omega_2]\right],
\]
with the common value allowed to be $+\infty$.
\end{fact}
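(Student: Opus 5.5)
This is Tonelli's theorem for nonnegative measurable $Z$ on a product probability space $(\Omega_1\times\Omega_2,\mc F_1\otimes\mc F_2,\mu_1\otimes\mu_2)$. I would prove it by the standard measure-theoretic route: first for indicators of measurable rectangles, then by monotone-class extension to all product-measurable sets, then for simple functions by linearity, and finally for general nonnegative $Z$ by monotone convergence. Here $\E_{\omega_2}[Z\mid\omega_1]$ is read as the section integral $\omega_1\mapsto\int Z(\omega_1,\omega_2)\,d\mu_2(\omega_2)$.

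\textbf{Step 1 (rectangles).} For $Z=\mathbf 1_{A\times B}$ with $A\in\mc F_1$ and $B\in\mc F_2$, the section integral is $\mu_2(B)\mathbf 1_A(\omega_1)$. This is measurable in $\omega_1$ and integrates to $\mu_1(A)\mu_2(B)$. The symmetric computation gives the same value, and by definition of the product measure this equals $\E[Z]$.

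\textbf{Step 2 (all measurable sets).} Let $\mc D$ be the class of sets $S\in\mc F_1\otimes\mc F_2$ whose indicator satisfies both the measurability of the section integrals and the two iterated identities. Rectangles form a $\pi$-system contained in $\mc D$. The class $\mc D$ is closed under proper differences, since all measures are finite, so subtraction is legitimate. It is also closed under increasing unions, by monotone convergence applied to the sections and then to the outer integrals. Dynkin's $\pi$-$\lambda$ theorem therefore gives $\mc D=\mc F_1\otimes\mc F_2$. Here uniqueness of the product measure on the generated $\sigma$-algebra is implicit in defining $\mu_1\otimes\mu_2$.

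\textbf{Step 3 (extension to nonnegative $Z$).} Linearity of the integral extends the identities to nonnegative simple functions. For general measurable $Z\ge0$, choose simple $0\le Z_n\uparrow Z$, for example the dyadic truncations $Z_n=\min\{n,2^{-n}\lfloor 2^nZ\rfloor\}$. For each fixed $\omega_1$, monotone convergence gives that the section integrals of $Z_n$ increase to that of $Z$. The limit is a pointwise limit of measurable functions, hence measurable, and a second application of monotone convergence passes the limit through the outer expectation. Both iterated integrals and $\E[Z]$ are thus limits of equal sequences, so they coincide, with the value $+\infty$ allowed throughout.

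\textbf{Main obstacle.} The only delicate point is Step 2: showing that measurability of the section integrals persists beyond rectangles. The $\pi$-$\lambda$ argument handles this, and finiteness of the measures is what makes the closure under differences legitimate. Since the paper states this as a standard fact, in the text I would simply cite it rather than reproduce the argument.
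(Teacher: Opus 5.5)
Your argument is correct. It is the standard textbook proof of Tonelli's theorem for finite (here probability) measures: rectangles, then Dynkin's $\pi$-$\lambda$ theorem, then simple functions, then monotone convergence. The only implicit ingredient is that every section of a product-measurable set is measurable, which you can simply add to the defining conditions of $\mc D$, since it is also preserved under proper differences and increasing unions.

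The paper gives no proof of this fact. It lists it among standard probability tools and cites it, which is exactly what you propose to do in the text.
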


\begin{fact}[First Borel--Cantelli lemma]\label[fact]{fact:first-borel-cantelli}
If $\sum_n\P[E_n]<\infty$, then $\P[\limsup_n E_n]=0$; no independence assumption is needed.
\end{fact}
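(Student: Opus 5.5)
The plan is to bound the probability of the limsup event by the probability of a tail union and then use that the tail of a convergent series tends to zero. First, I check that $\limsup_n E_n=\bigcap_{N\ge1}\bigcup_{n\ge N}E_n$ is measurable. It is built from the measurable sets $E_n$ by countably many unions and one countable intersection, so $\P[\limsup_n E_n]$ is well defined.

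Next, fix an arbitrary $N\ge1$. The limsup event is contained in each term of the intersection, so $\limsup_n E_n\subseteq\bigcup_{n\ge N}E_n$. Monotonicity and countable subadditivity of $\P$ (the union bound) then give
\[
  \P\Bigl[\limsup_n E_n\Bigr]\le \P\Bigl[\bigcup_{n\ge N}E_n\Bigr]\le \sum_{n\ge N}\P[E_n].
\]
No independence is used anywhere here: subadditivity holds for arbitrary measurable sets. This is exactly why the statement needs no independence assumption.

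Finally, the hypothesis $\sum_n\P[E_n]<\infty$ means the partial sums converge, so the tails $\sum_{n\ge N}\P[E_n]$ tend to $0$ as $N\to\infty$. The left-hand side above does not depend on $N$. Letting $N\to\infty$ therefore gives $\P[\limsup_n E_n]\le 0$, and since probabilities are nonnegative, $\P[\limsup_n E_n]=0$.

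There is no real obstacle in this argument. The only point needing care is to pass to the limit in $N$ after the union bound, rather than trying to apply subadditivity to the full union, whose bound $\sum_{n\ge1}\P[E_n]$ need not be small.
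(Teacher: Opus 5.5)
Your proof is correct and complete. It is the standard argument: the limsup event lies in every tail union, countable subadditivity bounds that union, and the tails of a convergent series vanish. The paper gives no proof of its own; it lists this among the standard facts valid on arbitrary probability spaces, and your argument is the textbook one that citation relies on.
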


\begin{fact}[Second Borel--Cantelli lemma]\label[fact]{fact:second-borel-cantelli}
If the events $(E_n)_n$ are mutually independent and $\sum_n\P[E_n]=\infty$, then $\P[\limsup_n E_n]=1$.
\end{fact}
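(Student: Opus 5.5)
This is a standard fact, and I would prove it by passing to complements and using independence to factor probabilities. The goal is to show that $\P\bigl[(\limsup_n E_n)^c\bigr]=0$. By De Morgan,
\[
  \Bigl(\limsup_n E_n\Bigr)^c=\bigcup_{N\ge1}\bigcap_{n\ge N}E_n^c .
\]
By countable subadditivity it therefore suffices to show that $\P\bigl[\bigcap_{n\ge N}E_n^c\bigr]=0$ for each fixed $N$.

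Fix $N$ and $M\ge N$. Mutual independence of $(E_n)_n$ implies mutual independence of the complements $(E_n^c)_n$, a standard consequence I would cite or verify by inclusion--exclusion on finitely many events. Hence
\[
  \P\Bigl[\bigcap_{n=N}^{M}E_n^c\Bigr]=\prod_{n=N}^{M}\bigl(1-\P[E_n]\bigr)\le\prod_{n=N}^{M}e^{-\P[E_n]}=\exp\Bigl(-\sum_{n=N}^{M}\P[E_n]\Bigr),
\]
where the inequality uses $1-x\le e^{-x}$ for all real $x$. Since $\sum_n\P[E_n]=\infty$, every tail sum $\sum_{n\ge N}\P[E_n]$ also diverges, so the right-hand side tends to $0$ as $M\to\infty$.

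To finish, the sets $\bigcap_{n=N}^{M}E_n^c$ decrease in $M$ to $\bigcap_{n\ge N}E_n^c$. Continuity of probability from above then gives $\P\bigl[\bigcap_{n\ge N}E_n^c\bigr]=0$. Taking the countable union over $N$ yields $\P\bigl[(\limsup_n E_n)^c\bigr]=0$, that is, $\P[\limsup_n E_n]=1$.

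The only step needing care is transferring independence to the complements. This follows by induction on the number of complemented events, using $\P[A\cap B^c]=\P[A]-\P[A\cap B]$. Everything else is routine.
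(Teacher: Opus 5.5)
Your proof is correct. It is the standard argument: pass to complements, factor by independence, bound with $1-x\le e^{-x}$, then use continuity from above and countable subadditivity, with independence transferred to complements by the usual induction. The paper gives no proof of this statement and simply records it as a standard fact on arbitrary probability spaces, so your argument is the textbook proof it implicitly relies on.
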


\begin{fact}[Countable full-measure intersection]\label[fact]{fact:countable-intersection}
If $\P[F_k]=1$ for every $k\in\N$, then $\P[\bigcap_k F_k]=1$.  Equivalently, by countable subadditivity, a countable union of null events is null.
\end{fact}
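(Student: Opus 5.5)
The plan is to pass to complements and use countable subadditivity of the probability measure. For each $k\in\N$, let $N_k=F_k^{c}$ denote the complement of $F_k$. The hypothesis $\P[F_k]=1$ gives $\P[N_k]=1-\P[F_k]=0$, because the $F_k$ are measurable and $\P$ is a probability measure. By De Morgan's law,
\[
  \Bigl(\bigcap_{k\in\N}F_k\Bigr)^{c}=\bigcup_{k\in\N}N_k ,
\]
so it suffices to show that this countable union of null events is null. That is exactly the ``equivalently'' clause of the statement, so proving it also establishes the equivalence.

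For the null-union step I would invoke countable subadditivity:
\[
  \P\Bigl[\bigcup_{k}N_k\Bigr]\le\sum_{k}\P[N_k]=0 .
\]
To keep the argument self-contained, I would derive subadditivity from countable additivity. Disjointify by setting $D_1=N_1$ and $D_k=N_k\setminus\bigcup_{j<k}N_j$ for $k\ge2$. These sets are measurable, pairwise disjoint, satisfy $D_k\subseteq N_k$, and have the same union as the $N_k$. Countable additivity and monotonicity then give
\[
  \P\Bigl[\bigcup_k N_k\Bigr]=\sum_k\P[D_k]\le\sum_k\P[N_k].
\]
Finally, $\P[\bigcap_k F_k]=1-\P[\bigcup_k N_k]=1$.

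For the converse direction of the stated equivalence, take null events $N_k$ and apply the first claim to $F_k=N_k^{c}$; this recovers nullity of $\bigcup_k N_k$. There is no substantive obstacle in this argument. The only point that needs care is \emph{countability}: the index set must be countable for $\sigma$-additivity to apply. This is why the later application to all uniform generators relies on the countability of oracle Turing machines, and I would flag that dependence when the fact is used.
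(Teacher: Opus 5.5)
Your argument is correct and follows the route the paper indicates: the paper states this as a standard fact with no separate proof, justifying it only by passing to complements and applying countable subadditivity, and you carry that out in full (even deriving subadditivity from countable additivity by disjointification). Your emphasis on countability of the index set also matches the paper, where the simultaneous statement over all uniform generators rests on there being only countably many oracle machines.
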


These standard facts are used throughout the main argument and its later variants.  The construction-specific fresh-point prediction bounds and finite-horizon reductions, by contrast, are proved directly in \cref{sec:results,sec:beyond-canonical}.

\section{The separation}
\label[section]{sec:results}

Recall the object $\Cstar$ of \cref{def:object}, formed from the graph of a random oracle.  The proof has three stages.  In \cref{lem:construction}, property~(S1) is syntactic, recurring agreement gives (S2), and the planted-oracle argument gives (S3).  \Cref{lem:easy,lem:hard} then derive information-theoretic easiness and computational hardness.  Finally, the almost-sure argument places both conclusions on one realized oracle and yields the separation of \cref{thm:main}.

\pagebreak[3]
\begin{lemma}[Construction: determined yet hidden]
\label[lemma]{lem:construction}
Relative to a random oracle $H$, the collection $\Cstar$ of \cref{def:object} satisfies:
\begin{enumerate}
\item[\textup{(S1)}] \textbf{\textup{(Canonical graph.)}}\enspace For each $b,i$, $E(b,i)$ is the unique element of index $i$ in any $\Lb$ with $|b|=\lambda$; membership ``$x\in\Lb$?'' given $b$ costs one query to $H$ and $\poly(|b|+|x|)$ time; each $\Lb$ is infinite.
\item[\textup{(S2)}] \textbf{\textup{(Infinite closures.)}}\enspace $\Cdim(\Cstar)=0$ almost surely.
\item[\textup{(S3)}] \textbf{\textup{(The seed is hidden.)}}\enspace For every possibly randomized algorithm $A$ making $T$ queries to $H$, given $\{E(b,i):i\in Q\}$ for a uniformly random hidden seed $b\in\{0,1\}^{\lambda}$ and any finite $Q\subseteq\N$ chosen independently of $(b,H)$,
  \[
    \P_{b,H,\rho_A}\bigl[A\text{ outputs }E(b,j)\text{ for some }j\notin Q\bigr]\;\le\;(T+2)\,2^{-\lambda},
  \]
where $\rho_A$ denotes $A$'s private coins. In particular this is $\negl(\lambda)$ for $T=\poly(\lambda)$ and stays $\le 2^{-\Omega(\lambda)}$ for every $T=2^{o(\lambda)}$.
\end{enumerate}
\end{lemma}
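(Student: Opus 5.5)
The three properties call for three different tools: (S1) is syntactic and follows from the encoding, (S2) is a second Borel--Cantelli argument plus a countable intersection, and (S3) is a planted-oracle coupling.

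\textbf{(S1).} This is read off from \cref{def:object}. The tuple $\tuple{\lambda,i,H(b,i)}$ is self-delimiting, so an element of $\Lb$ with index field $i$ must equal $E(b,i)$. To test membership of $x$, parse $x$, check that the length tag equals $|b|$, query $H(b,i)$ once, and compare. Distinct indices give distinct elements, so each $\Lb$ is infinite.

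\textbf{(S2).} I would show that every finite nonempty $F$ with $\mc V(F)\neq\varnothing$ has infinite closure; then no $d\ge1$ qualifies and $\Cdim(\Cstar)=0$.
\begin{itemize}
\item Because of the length tags, $F$ lies at a single length $\lambda$, and $\mc V(F)$ is a nonempty subset $S\subseteq\{0,1\}^\lambda$.
\item Fix any nonempty $S\subseteq\{0,1\}^\lambda$ and define $A_j$ as the event that all rows in $S$ agree at index $j$.
\item The events $A_j$ are independent across $j$, each with probability $2^{-(|S|-1)\lambda}>0$, so the second Borel--Cantelli lemma (\cref{fact:second-borel-cantelli}) gives infinitely many agreements almost surely.
\item There are countably many pairs $(\lambda,S)$, so by \cref{fact:countable-intersection} this holds simultaneously for all of them.
\end{itemize}
On that event, every version space $\mc V(F)$ has infinitely many common points, so $\cl(F)$ is infinite. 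I would note that $S$ depends on $H$; this is harmless because we quantified over all $S$ before intersecting.

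\textbf{(S3).} This is the main step, and I would handle it by coupling. Sample $b$, an independent oracle $H'$, and define $\tilde H$ to agree with $H'$ except that $\tilde H(b,i)=H(b,i)$ for $i\in Q$. Since $Q$ is independent of $(b,H)$, the pair $(b,\tilde H)$ has the same distribution as $(b,H)$.

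Next, run $A$ on the samples against $H'$ instead of $\tilde H$. The samples are $\{\tuple{\lambda,i,H(b,i)}:i\in Q\}$, which are independent of $(b,H')$, so this shadow run is independent of $b$.

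The two executions agree unless $A$ queries a point $(b,i)$ with $i\in Q$. Any query $(b',j)$ with $j\notin Q$ is answered identically by $\tilde H$ and $H'$. In the shadow run, the seeds queried form a set of at most $T$ strings independent of $b$, so the probability that the true row is hit is at most $T2^{-\lambda}$.

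Conditioned on no such hit, I consider the shadow run's output $E(b',j)$ with $j\notin Q$. The run is independent of $b$, so two cases cover it:
\begin{itemize}
\item The output names $b'=b$ by guessing: probability $2^{-\lambda}$.
\item The value $H'(b,j)$ was never queried: it is then uniform, so a given output value matches with probability $2^{-\lambda}$.
\end{itemize}
The output is a string, and the event is that it equals $E(b,j)$ for some $j\notin Q$, so I must parse the index $j$ from the output. Since $j$ is determined by the output, the event is ``the value field equals $H'(b,j)$''. If $(b,j)$ was queried in the shadow run, then $b$ was among the queried seeds, which is already counted in $T2^{-\lambda}$. Otherwise the value is uniform and contributes $2^{-\lambda}$.

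Summing gives $(T+1)2^{-\lambda}$, with one extra $2^{-\lambda}$ kept as slack, for the stated $(T+2)2^{-\lambda}$.

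The delicate part is making the coupling rigorous for adaptive queries. I would formalize it as identical-until-bad: the two runs share coins, and the bad event, a query to row $b$ at an index in $Q$, has the same probability in both runs. I would then bound the bad event in the shadow run, where the queries are independent of $b$.
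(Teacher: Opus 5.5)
Your proposal is correct and follows the paper's proof. It derives (S1) from the tagged encoding, and (S2) from second Borel--Cantelli on each fixed seed set plus a countable intersection, which handles the $H$-dependence of the version space as the paper does. For (S3) it plants the samples at row $b$ of an independent oracle and runs a $b$-independent shadow execution whose tested-seed set contains $b$ with probability at most $T2^{-\lambda}$; a uniform fresh value then adds $2^{-\lambda}$. Your finer bad event (queries to $(b,i)$ with $i\in Q$) and your separate case of queries to $(b,j)$ with $j\notin Q$ are both contained in $\{b\in\mathrm{Seeds}\}$, matching the paper's coarser ``any seed-$b$ query'' event. The bullet about the output ``naming $b'=b$'' is vacuous, since outputs carry no seed, so what you actually prove is $(T+1)2^{-\lambda}$, exactly as in the paper.
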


\begin{proof}
\textbf{(S1) Canonical graph.} The tag $\lambda$ and the index $i$ are literal fields of $E(b,i)=\tuple{\lambda,i,H(b,i)}$. Given $b$, deciding $\tuple{\lambda',i,y}\in\Lb$ amounts to checking $\lambda'=|b|$, $|y|=\lambda$, and $H(b,i)=y$: one query to $H$ and $\poly(|b|+|x|)$ time. Distinct indices give distinct elements, so each $\Lb$ is infinite. The self-delimiting tag excludes cross-length collisions, so for each pair $(\lambda,i)$ exactly one string $\tuple{\lambda,i,\cdot}$ lies in any $\Lb$ with $|b|=\lambda$, namely $E(b,i)$.

\smallskip
\textbf{(S2) Closure dimension zero.} We first isolate a probability-one event without conditioning on an oracle-dependent version space.  Fix $\lambda$ and a deterministic set $B_0\subseteq\{0,1\}^{\lambda}$ with $|B_0|\ge2$, and define
\[
  J_{B_0}\;\defeq\;\{\,i\in\N:H(b,i)\text{ is equal for all }b\in B_0\,\}.
\]
For each $i$, the values $\{H(b,i)\}_{b\in B_0}$ are $|B_0|$ independent uniform $\lambda$-bit strings, so
\[
  \P_H[i\in J_{B_0}]=2^{-(|B_0|-1)\lambda}>0.
\]
The events $\{i\in J_{B_0}\}_{i\in\N}$ are independent because they use disjoint oracle coordinates. The second Borel--Cantelli lemma (\cref{fact:second-borel-cantelli}) therefore gives $|J_{B_0}|=\infty$ almost surely.  There are only countably many pairs $(\lambda,B_0)$, so the countable full-measure intersection (\cref{fact:countable-intersection}) yields one event $\mc E_{\mathrm{agr}}$ on which every such fixed $B_0$ has infinitely many agreement indices.

Work on $\mc E_{\mathrm{agr}}$ and fix $d\ge1$ distinct strings $x_1,\dots,x_d$.  Let
\[
  B\;\defeq\;\bigl\{\,b\in\textstyle\bigcup_{\lambda\ge1}\{0,1\}^{\lambda}:
  x_1,\dots,x_d\in\Lb\,\bigr\}
\]
be the consistent-seed set.  If $B\ne\varnothing$, the examples share a unique length tag $\lambda_F$ and $B\subseteq\{0,1\}^{\lambda_F}$, because a language contains only strings bearing its own length tag.  Three cases.
\begin{itemize}
\item $|B|=0$: the version space is empty and the closure is $\bot$.
\item $|B|=1$, say $B=\{b\}$: the closure is $\Lb$, which is infinite.
\item $|B|\ge 2$: a string $\tuple{\lambda_F,i,y}$ lies in every $\Lb$, $b\in B$, iff $H(b,i)=y$ for all $b\in B$.  Define $J_B\defeq\{i\in\N:H(b,i)\text{ is equal for all }b\in B\}$.  Then, fixing any $b_0\in B$,
  \[
    \cl(\{x_1,\dots,x_d\})\;=\;\{\,E(b_0,i):i\in J_B\,\}.
  \]
Although this realized $B$ may depend on $H$, it is one of the deterministic seed sets covered simultaneously by $\mc E_{\mathrm{agr}}$.  Thus $J_B$ is infinite and so is the closure.
\end{itemize}
Hence, on $\mc E_{\mathrm{agr}}$, no nonempty example set has a nonempty finite closure, and $\Cdim(\Cstar)=0$.

\smallskip
\textbf{(S3) Unpredictability.} We prove the bound $T\,2^{-\lambda}+2^{-\lambda}$ (\cref{lem:s3-formal}), which implies the stated $(T+2)\,2^{-\lambda}$.  The one idea is that adaptivity buys nothing: coupling $A$'s run against an oracle with the samples \emph{planted} at the hidden seed makes the set of seeds $A$ ever tests manifestly independent of $b$, so no adaptive strategy finds $b$ faster than blind guessing.
\end{proof}

\begin{lemma}[Fresh-point unpredictability under samples]
\label[lemma]{lem:s3-formal}
Fix $\lambda$, and let $H$ be the full tagged random oracle of \cref{def:object}, so that $H(\beta,i)\in\{0,1\}^{|\beta|}$ for every seed $\beta$.  Let $b\in\{0,1\}^{\lambda}$ be a uniform seed independent of $H$, and let $Q\subseteq\N$ be a \emph{finite, fixed} index set (chosen independently of $(b,H)$), with samples $S=\{(i,H(b,i)):i\in Q\}$. Every possibly randomized algorithm $A$, with private coins $\rho_A$, making at most $T$ adaptive queries to $H$ satisfies
\[
  \P_{b,H,\rho_A}\!\left[\,A^{H}(S)=\tuple{\lambda,j,w}\ \text{with}\ j\notin Q\ \text{and}\ w=H(b,j)\,\right]
  \;\le\; T\,2^{-\lambda}+2^{-\lambda}.
\]
In particular this is $\negl(\lambda)$ for $T=\poly(\lambda)$ and $\le 2^{-\Omega(\lambda)}$ for every $T\le 2^{(1-\delta)\lambda}$, $\delta>0$; in particular it is at most $(T+2)2^{-\lambda}$, the convenient bound used throughout.
\end{lemma}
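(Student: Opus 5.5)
The plan is to couple $A$'s execution with a \emph{shadow} execution against a planted oracle, as the technique overview sketches. Sample a fresh independent random oracle $H'$ with the same tagged domain as $H$. Define the planted oracle $\tilde H$ by $\tilde H(b,i)=H(b,i)$ for $i\in Q$ and $\tilde H(\beta,i)=H'(\beta,i)$ elsewhere. The samples $S$ are exactly the pairs $(i,\tilde H(b,i))$ for $i\in Q$.

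The first step is a distributional identity. Given $b$, the oracle $\tilde H$ is again a uniform random oracle, and $S$ is determined by $\tilde H$ on row $b$. So the joint law of $(b,H,S)$ equals that of $(b,\tilde H,S)$. It therefore suffices to bound the success probability of $A^{\tilde H}(S)$, where success means outputting $\tuple{\lambda,j,\tilde H(b,j)}$ with $j\notin Q$.

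Next, introduce a \emph{b-independent} execution. Fix any reference seed, or more cleanly take the values $S$ as given strings. Run $A$ on input $S$ against the oracle $O$ that agrees with $H'$ everywhere except at the coordinates $(\beta_S,i)$ for $i\in Q$. A cleaner choice: run $A^{H'}(S)$ and treat $S$ as an arbitrary input. Here $S$ consists of $|Q|$ uniform strings independent of $b$ and $H'$, because $H(b,\cdot)$ restricted to $Q$ is uniform and independent of $H'$ for every fixed $b$. In this execution, the full transcript, the set $\mathcal Q$ of at most $T$ query points, and the output are functions of $(S,H',\rho_A)$ only, and hence independent of $b$.

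The two executions $A^{\tilde H}(S)$ and $A^{H'}(S)$ coincide until $A$ queries some point $(b,i)$ with $i\in Q$. So they diverge only if $\mathcal Q$ contains a query whose seed equals $b$. By independence and a union bound over at most $T$ queries, each with seed uniform-independent of $b$, this has probability at most $T2^{-\lambda}$. The main care point is that this divergence event must be evaluated in the $b$-independent run, which is valid by the standard ``identical until bad'' argument.

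On the no-divergence event, the output $\tuple{\lambda,j,w}$ equals the shadow output. If $j\notin Q$, success requires $w=\tilde H(b,j)=H'(b,j)$. There are two cases for the point $(b,j)$:
\begin{itemize}
\item If $(b,j)$ was queried, then the seed $b$ was queried. That is a seed hit already counted in the $T2^{-\lambda}$ term; I would define the bad event as ``some query has seed $b$'' to absorb this case.
\item Otherwise $H'(b,j)$ is uniform and independent of $(S,\rho_A)$ and of the transcript, so $w$ matches it with probability $2^{-\lambda}$.
\end{itemize}
Summing the two contributions gives $T2^{-\lambda}+2^{-\lambda}$. The asymptotic remarks follow immediately, since $(T+1)2^{-\lambda}\le(T+2)2^{-\lambda}\le 2^{-\delta\lambda+1}$ for $T\le 2^{(1-\delta)\lambda}$.

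The main obstacle is making the coupling rigorous for adaptive, randomized $A$. I would handle it by conditioning on $\rho_A$, which makes $A$ deterministic, and applying the identical-until-bad lemma to the two oracle functions, which differ only on row $b$.
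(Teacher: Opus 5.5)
Your proposal is correct and is essentially the paper's proof. You plant the row-$b$ sample values into an otherwise independent oracle and run a shadow execution $A^{H'}(S)$ whose queried-seed set is independent of $b$, which gives the $T2^{-\lambda}$ term by identical-until-bad; on the no-hit event you bound success by a blind guess of the unqueried fresh value $H'(b,j)$, which gives the extra $2^{-\lambda}$. The differences are cosmetic and equivalent: you take the planted values from $H$'s own row rather than sampling them fresh, and you define the bad event directly on the shadow run rather than deriving $\mathsf{QT}\subseteq\{b\in\mathrm{Seeds}\}$.
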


\Cref{fig:planting-coupling} summarizes the coupling that makes the bound robust to adaptive queries.

\begin{figure}[H]
\centering
\begin{tikzpicture}[
  font=\footnotesize,
  box/.style={draw=black!55, rounded corners=2pt, align=center, inner sep=5pt,
    minimum height=1.05cm},
  sourcebox/.style={box, fill=black!4, text width=4.25cm},
  oraclebox/.style={box, fill=LinkColor!7, text width=4.35cm},
  runbox/.style={box, fill=black!3, text width=4.25cm},
  seedbox/.style={box, fill=LinkColor!5, text width=2.75cm},
  outbox/.style={box, fill=CiteColor!6, text width=3.7cm},
  flow/.style={-{Latex[length=2mm]}, line width=.65pt, draw=black!70}
]
\node[sourcebox] (sources) {\textbf{Independent sources}\\
  $Q$ fixed; $b,g,(y_i)_{i\in Q},\rho_A$ independent\\[-1pt]
  $S=\{(i,y_i):i\in Q\}$};
\node[oraclebox, right=9mm of sources] (oracle) {\textbf{Plant at the hidden row}\\
  $H(b,i)=y_i$ for $i\in Q$\\[-1pt]
  $H(\beta,i)=g(\beta,i)$ otherwise};
\node[runbox, below=13mm of sources] (shadow) {\textbf{Shadow execution}\\
  run $A^g(S)$ with coins $\rho_A$\\[-1pt]
  transcript independent of $b$};
\node[seedbox, right=9mm of shadow] (seeds) {\textbf{Tested rows}\\
  $\mathrm{Seeds}\perp b$\\[-1pt]
  $|\mathrm{Seeds}|\le T$};
\node[outbox, right=9mm of seeds, yshift=6.5mm] (hit) {\textbf{True row tested}\\
  $\P[b\in\mathrm{Seeds}]\le T2^{-\lambda}$};
\node[outbox, right=9mm of seeds, yshift=-6.5mm] (guess) {\textbf{No true-row query}\\
  fresh-value guess $\le 2^{-\lambda}$};

\draw[flow] (sources) -- (oracle);
\draw[flow] (sources) -- (shadow);
\draw[flow] (shadow) -- (seeds);
\draw[flow] (seeds) -- (hit);
\draw[flow] (seeds) -- (guess);
\draw[flow, densely dashed] (oracle.south) -- node[left, font=\scriptsize, align=right]
  {coupled to the shadow run\\before the first seed-$b$ query} (seeds.north);
\end{tikzpicture}
\caption{The planted coupling behind \cref{lem:s3-formal}.  Planting preserves the law of a random
oracle independent of the hidden seed, while the shadow transcript and its tested-row set remain independent of that seed.  Success is bounded by either testing row $b$ or guessing a fresh $\lambda$-bit oracle value.}
\label[figure]{fig:planting-coupling}
\end{figure}

\begin{proof}
Sample four independent sources: sample values $(y_i)_{i\in Q}$ i.i.d.\ uniform on $\{0,1\}^{\lambda}$; a uniformly random \emph{generic oracle} $g$ on the full tagged domain, with $g(\beta,i)\in\{0,1\}^{|\beta|}$; $b$ uniform on $\{0,1\}^{\lambda}$; and $A$'s private random tape $\rho_A$ (degenerate if $A$ is deterministic), all mutually independent. Define the actual oracle by \emph{planting} the samples at the (unknown) seed $b$:
\[
  H(\beta,i)\;\defeq\;
  \begin{cases}
    y_i & \text{if } \beta=b \text{ and } i\in Q,\\
    g(\beta,i) & \text{otherwise.}
  \end{cases}
\]

\emph{Faithfulness.} Fix any $\beta_0$; then $H$ is $g$ with its $|Q|$ values on $\{(\beta_0,i):i\in Q\}$ overwritten by independent uniform $y_i$, which is again a uniform random function whose law is independent of $\beta_0$. Hence $H$ is a random oracle independent of $b$, $H(b,i)=y_i$ for $i\in Q$, and $H$ agrees with $g$ off the planted set $P\defeq\{(b,i):i\in Q\}$; every planted input carries seed $b$.

\emph{Shadow run.} Run $A$ with coins $\rho_A$ and input $S$, answering queries with $g$ instead of $H$. Since $(g,(y_i),\rho_A)\perp b$, the shadow transcript is a function of $(g,(y_i),\rho_A)$, hence independent of $b$. Let $\mathrm{Seeds}\defeq\{\beta\in\{0,1\}^{\lambda}: A\ \text{queries some}\ (\beta,i)\ \text{in the shadow run}\}$; then $|\mathrm{Seeds}|\le T$ and $\mathrm{Seeds}\perp b$.  Queries with other seed lengths are unplanted and affect neither the coupling nor this count.

\emph{Coupling.} Run $A$ with the true $H$ and the same $\rho_A$. The runs receive identical answers at every input off $P$, and every input of $P$ has seed $b$. Let $k$ be the first step at which the real run queries a seed-$b$ input ($k=\infty$ if none). Through step $k-1$ the real run queries only non-seed-$b$ inputs, where $H=g$; as $A$ is deterministic given its view, the real and shadow runs issue identical queries and receive identical answers through step $k-1$, so their step-$k$ queries coincide.

\emph{(i) The true-seed query.} Let $\mathsf{QT}$ be the event that the real run ever queries seed $b$. On $\mathsf{QT}$ the step-$k$ query has seed $b$ and equals the shadow run's, so $b\in\mathrm{Seeds}$; thus $\mathsf{QT}\subseteq\{b\in\mathrm{Seeds}\}$ and, as $\mathrm{Seeds}\perp b$ has size $\le T$,
\[
  \P[\mathsf{QT}]\le\P[b\in\mathrm{Seeds}]=\E[|\mathrm{Seeds}|]\,2^{-\lambda}\le T\,2^{-\lambda}.
\]

\emph{(ii) Without it.} Condition on $b=\beta_0$. On $\lnot\mathsf{QT}$ the real run never queries seed $\beta_0$, so it equals the shadow run and $A$'s output $\tuple{\lambda,j,w}$ is a function of the shadow transcript. Success needs $j\notin Q$, whence $H(\beta_0,j)=g(\beta_0,j)$ is unplanted and queried by neither run, hence uniform on $\{0,1\}^{\lambda}$ conditionally on $A$'s view; as $(j,w)$ is a function of that view, $\P[w=H(\beta_0,j)\mid\text{view}]=2^{-\lambda}$. So $\P[\text{success}\wedge\lnot\mathsf{QT}]\le 2^{-\lambda}$.

Combining, $\P[\text{success}]\le\P[\mathsf{QT}]+\P[\text{success}\wedge\lnot\mathsf{QT}]\le T2^{-\lambda}+2^{-\lambda}$.
\end{proof}

\begin{remark}[Adaptivity and decoys]\label[remark]{rem:adaptive-decoy}
Adaptive consistency tests only enlarge the $b$-independent set $\mathrm{Seeds}$; they do not increase the chance that it contains $b$.  A consistent decoy $\beta'\ne b$ is equally unhelpful because $H(\beta',j)$ is independent of the fresh target value $H(b,j)$.  The coupling is essential: independence of $\mathrm{Seeds}$ from $b$, not an exchangeability heuristic, handles adaptivity.
\end{remark}

\begin{lemma}[Zero-mistake information-theoretic generation]
\label[lemma]{lem:easy}
Almost surely over $H$, a computationally unbounded generator makes zero mistakes on every target $\Lb$, under every complete distinct enumeration.
\end{lemma}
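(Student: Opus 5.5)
The plan is to work on the probability-one event $\mc E_{\mathrm{agr}}$ built in the proof of (S2) of \cref{lem:construction}. On this event, every deterministic seed set $B_0\subseteq\{0,1\}^{\lambda}$ with $|B_0|\ge2$, over every length $\lambda$, has infinitely many agreement indices $J_{B_0}$. On $\mc E_{\mathrm{agr}}$ I will write down an explicit deterministic, computationally unbounded generator (the paper's \cref{alg:zero-mistake}). I will then check three things: every output computation halts, every output lies in $\Lb$, and no output has already been revealed. Since the event does not depend on the target or the enumeration, the quantifiers ``every target, every enumeration'' come for free.

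\emph{Cold start ($t=0$).} The generator knows only $\lambda$. It takes $B_0=\{0,1\}^{\lambda}$ and scans $i=1,2,\ldots$, querying $H(b,i)$ for all $2^\lambda$ seeds, until it finds an index $i_0$ at which all rows agree. It then outputs $E(b,i_0)$ for any $b$. For $\lambda\ge1$ we have $|B_0|\ge2$, so on $\mc E_{\mathrm{agr}}$ the index $i_0$ exists and the scan halts. The output lies in every $\Lb$ of length $\lambda$, and at $t=0$ nothing has been revealed, so this is not a mistake.

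\emph{Round $t\ge1$.} Given $x_1,\ldots,x_t$, the generator reads off the tag $\lambda$ and the indices $I_t=\{i:\tuple{\lambda,i,\cdot}\text{ revealed}\}$. It computes the finite set $B_t=\{b\in\{0,1\}^\lambda: x_1,\ldots,x_t\in\Lb\}$ by exhaustive querying, using (S1) for membership. Because the history is valid, the true seed lies in $B_t$, so $B_t\ne\varnothing$. If $|B_t|=1$, the generator outputs $E(b,i)$ for the least $i\notin I_t$. If $|B_t|\ge2$, it scans $i\notin I_t$ until all rows of $B_t$ agree at $i$, and outputs the common point. Halting follows because $J_{B_t}$ is infinite on $\mc E_{\mathrm{agr}}$ while $I_t$ is finite. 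Validity follows because the target seed $b^*$ belongs to $B_t$, so the output equals $E(b^*,i)\in L_{b^*}$.

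The main subtlety is ruling out stale outputs. By (S1), each index appears at most once in $L_{b^*}$, so the revealed points are exactly $\{E(b^*,i):i\in I_t\}$. Choosing $i\notin I_t$ therefore guarantees freshness. A second point is that $B_t$ depends on $H$. As in (S2), this is harmless: $\mc E_{\mathrm{agr}}$ covers all deterministic seed sets simultaneously, so no conditioning argument is needed.
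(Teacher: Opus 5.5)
Your proposal is correct and follows essentially the same route as the paper's proof: you work on the event $\mc E_{\mathrm{agr}}$ from (S2), run \cref{alg:zero-mistake} with the cold-start scan over all $2^\lambda$ rows and the version-space scan over fresh indices afterward, and handle the $H$-dependence of $B_t$ because $\mc E_{\mathrm{agr}}$ covers every deterministic seed set simultaneously. Your freshness argument via (S1) and your separate singleton case spell out explicitly points that the paper states more tersely.
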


The following oracle procedure makes the unbounded construction explicit.  On the probability-one event established in the proof below, it halts on every valid generation-game history; the open-ended index scan is intentional.

\begin{algorithm}[H]
\caption{Unbounded zero-mistake generation}
\label[algorithm]{alg:zero-mistake}
\begin{algorithmic}[1]
\Require $1^\lambda$, valid history $x_s=\tuple{\lambda,i_s,y_s}$ for $1\le s\le t$, and oracle access to $H$
\State $F_t\gets\{x_1,\ldots,x_t\}$ and $I_t\gets\{i_1,\ldots,i_t\}$
\State $B_t\gets\{\beta\in\{0,1\}^{\lambda}:F_t\subseteq L_\beta\}$
  \Comment{exhaustive version space; $B_0=\{0,1\}^{\lambda}$}
\For{$j\in\N\setminus I_t$ in increasing order} \Comment{unbounded search}
  \State Query $H(\beta,j)$ for every $\beta\in B_t$
  \If{$\{H(\beta,j):\beta\in B_t\}=\{y\}$ for some $y$}
    \State \Return $\tuple{\lambda,j,y}$
  \EndIf
\EndFor
\end{algorithmic}
\end{algorithm}

\begin{proof}
Work on the event $\mc E_{\mathrm{agr}}$ constructed in the proof of (S2).  At the cold start, $B_0=\{0,1\}^{\lambda}$ has infinitely many common indices, so \cref{alg:zero-mistake} finds an output valid for every length-$\lambda$ target.  After examples arrive, the consistent seed set $B_t$ is finite; if it is a singleton every fresh index works, and otherwise (S2) gives infinitely many common indices on $\mc E_{\mathrm{agr}}$.  Excluding the finite revealed set therefore leaves a valid output at every round.  The scans may be arbitrarily expensive, as the information-theoretic model permits.
\end{proof}

\begin{remark}[Why the empty-history output matters]\label[remark]{rem:cold-start}
The equality $\Cdim(\Cstar)=0$ constrains closures of nonempty example sets; by itself it says nothing about the intersection of all targets in the length-$\lambda$ game before any example is revealed. Thus the forced output $a_0$ in \cref{def:model} could still contribute one mistake even if every later output were closure-safe.  The separate infinite common-intersection property proved above makes $a_0$ valid uniformly over the target seed and removes exactly this possible initial mistake.  This distinction sharpens the information-theoretic bound from one to zero; it is not used by the computational lower-bound mechanism.
\end{remark}

\begin{lemma}[Computational hardness]
\label[lemma]{lem:hard}
Let $\sigma_b$ denote the canonical ascending enumeration $E(b,1),E(b,2),\dots$ of $\Lb$. For every uniform polynomial-query generator $G$ and every nonnegative polynomial $p$, put $m_p=\lceil p(\lambda)\rceil$ and $W_p=2(m_p+1)$.  The \emph{finite-horizon joint seed--oracle bound}
\[
  \P_{b,H}\bigl[\,\E_G[\mist_{<W_p}(G,\Lb,\sigma_b)]\le m_p\,\bigr]
  \;\le\;\negl(\lambda)
\]
holds, where the outer probability is over a uniform seed $b\in\{0,1\}^{\lambda}$ and the oracle $H$; the expectation has already averaged over $G$'s coins. Consequently, for every polynomial $p$ and all sufficiently large $\lambda$, all but a negligible fraction of seed--oracle pairs $(b,H)$ have more than $m_p\ge p(\lambda)$ expected mistakes already in their first $W_p$ outputs.  In particular, the same lower bound holds for total mistakes.  The exponential version is quantified in \cref{prop:perseed-exp}.
\end{lemma}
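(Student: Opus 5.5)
The plan is to reduce a low-mistake generator to a fresh-point predictor and then apply \cref{lem:s3-formal}. Fix $G$ with query bound $q_G$ and a polynomial $p$, and put $m=m_p$ and $W=W_p$. We build an algorithm $A$ that takes as input the samples $S=\{(i,H(b,i)):i\in Q\}$. $A$ draws $t$ uniformly from $\{0,1,\dots,W-1\}$, sets $Q=\{1,\dots,t\}$, and simulates $G$ on the canonical history $E(b,1),\dots,E(b,t)$ with fresh internal coins. It answers $G$'s oracle queries with its own queries to $H$ and outputs $a_t$.

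The index set $Q$ depends on $t$, which is drawn independently of $(b,H)$. So for each fixed $t$ we apply \cref{lem:s3-formal} with $Q=\{1,\dots,t\}$ and then average over $t$. This is legitimate because the lemma holds for every fixed finite $Q$.

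To compute $A$'s query count, note that the simulation must recompute outputs $a_0,\dots,a_t$, because $G$ is stateful and uses a single random tape. The total is therefore
\[
T\le\sum_{i<W}q_G(\lambda,i)\le W\,q_G(\lambda,W)=\poly(\lambda).
\]
By \cref{lem:s3-formal}, for each $t$ the probability that $a_t$ equals $E(b,j)$ for some $j\notin Q$ is at most $(T+2)2^{-\lambda}$. Averaging over $t$ gives
\[
\P_{b,H,\rho,t}[\text{success}]\le (T+2)2^{-\lambda}=\negl(\lambda).
\]

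Next we connect non-mistakes to successes. By (S1), every element of $\Lb$ has the form $E(b,j)$ for a unique index $j$. Under the canonical enumeration the revealed set after $t$ rounds is exactly $\{E(b,i):i\le t\}$. Hence $a_t$ is a non-mistake if and only if $a_t=E(b,j)$ for some $j\notin Q$, which is precisely success for $A$.

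Now let $\mathcal{B}$ be the event $\E_G[\mist_{<W}]\le m$. On $\mathcal{B}$, the expected number of non-mistakes among the first $W$ outputs is at least $W-m\ge W/2$, since $W=2(m+1)$. Therefore $A$ succeeds with conditional probability at least $1/2$ given $(b,H)\in\mathcal{B}$, where the probability is over $t$ and $G$'s coins. This yields
\[
\tfrac12\,\P_{b,H}[\mathcal{B}]\le\P[\text{success}]\le\negl(\lambda),
\]
and hence $\P_{b,H}[\mathcal{B}]\le 2\,\negl(\lambda)$. Combined with the previous step, this bounds the probability of the low-mistake event, which is the main claim.

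For the consequence, $m_p\ge p(\lambda)$ holds by definition, and total mistakes are at least $\mist_{<W_p}$, so the same lower bound transfers to total mistakes.

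The main obstacle is the coupling subtlety. We must ensure that the simulated execution of $G$, fed with the actual samples, has exactly the law of $G$'s real run against target $\Lb$. We must also ensure that \cref{lem:s3-formal} applies with a random $t$. We handle the latter by conditioning on $t$ before invoking the lemma, using that $t$ is independent of $(b,H)$. For the former, note that $G$ receives no feedback, so its view consists only of the history and its oracle answers, and $A$ reproduces both faithfully. Everything else is bookkeeping on the query count.
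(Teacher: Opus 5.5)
Your proposal is correct and follows essentially the same route as the paper: a uniformly random stopping round in $\{0,\dots,W-1\}$, the dense-band bound $W-m=m+2\ge W/2$, the (S1) identification of non-mistakes under $\sigma_b$ with correct fresh target points, and \cref{lem:s3-formal} applied conditionally on the stopping round with $T\le\sum_{i<W}q_G(\lambda,i)=\poly(\lambda)$. The only nit is that your step $\sum_{i<W}q_G(\lambda,i)\le W\,q_G(\lambda,W)$ presumes $q_G$ is nondecreasing in $i$. You may assume this without loss of generality by replacing $q_G$ with a dominating polynomial that has nonnegative coefficients.
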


The reduction turns a mistake-bounded generator into a fresh-point predictor: by (S1), a non-mistake on the graph is a correct target value at an index unrevealed by the sample history, which (S3) makes hard to produce.

\begin{proof}
Fix a uniform $\poly$-query generator $G$ and a polynomial $p$; let $m\defeq m_p=\lceil p(\lambda)\rceil$ and $W\defeq W_p=2(m+1)$. Under the canonical enumeration $\sigma_b$ the revealed set at step $t$ is $\{E(b,i):1\le i\le t\}$ (empty at $t=0$), and $G$ produces $a_0=G(1^\lambda,\varnothing)$ and $a_t=G(1^\lambda,x_1,\dots,x_t)$ using its own oracle queries.

\smallskip
\textbf{Step 1 (a non-mistake is a fresh-index prediction).} If $a_t$ is not a mistake, then $a_t\in \Lb\setminus\{x_1,\dots,x_t\}$. Membership in $\Lb$ forces $a_t=\tuple{\lambda,j,H(b,j)}=E(b,j)$ for its index field $j$, by (S1); and $a_t\notin\{E(b,1),\dots,E(b,t)\}$ forces $j>t$. So a non-mistake at step $t$ is exactly a correct output $E(b,j)$ at a fresh index $j>t$.

\smallskip
\textbf{Step 2 (the mistake budget forces a dense band).} Fix a seed--oracle pair $(b,H)$ with
\[
  \E_G[\mist_{<W}(G,\Lb,\sigma_b)]\le m.
\]
Since the window contains $W$ outputs,
\[
  \E_G[\#\{\text{non-mistakes in }0,\dots,W-1\}]\ge W-m=m+2.
\]
Drawing $t^{*}$ uniformly from $\{0,\dots,W-1\}$,
\begin{equation}\label{eq:band}
  \P_{G,t^{*}}[\,a_{t^{*}}\text{ is a non-mistake}\,]
  \;=\;\frac1W\sum_{t=0}^{W-1}\P_G[\text{non-mistake at }t]
  \;\ge\;\frac{m+2}{2(m+1)}\;\ge\;\frac12.
\end{equation}

\smallskip
\textbf{Step 3 (the predictor).} Define the predictor $A_\lambda$ for the game of~(S3): draw $t^{*}$ uniform in $\{0,\dots,W-1\}$; set $Q=\{1,\dots,t^{*}\}$ (empty if $t^{*}=0$) and receive $\{E(b,i):i\in Q\}$; draw $G$'s coins internally and feed $G$ the canonical prefix histories $\varnothing,(x_1),\dots,(x_1,\dots,x_{t^{*}})$, collecting $a_0,\dots,a_{t^{*}}$; output $\hat y\defeq a_{t^{*}}$. All of $A_\lambda$'s oracle queries are $G$'s own simulated queries on this prefix, so $A_\lambda$ is a legitimate predictor with the seed hidden and $|Q|\le W-1$. By Step 1, whenever $a_{t^{*}}$ is a non-mistake we have $\hat y=E(b,j)$ with $j>t^{*}$, hence $j\notin Q$: exactly the success event of (S3).

\smallskip
Let
\[
  B_\lambda\defeq\{(b,H):\E_G[\mist_{<W}(G,\Lb,\sigma_b)]\le m\}.
\]
Combining Step 2 (on each $(b,H)\in B_\lambda$, the conditional success is $\ge\tfrac12$) with Step 1,
\[
  \P_{b,H,\,\mathrm{coins},\,t^{*}}[\,A_\lambda\text{ outputs }E(b,j),\ j\notin Q\,]
  \;\ge\;\tfrac12\,\P_{b,H}[\,(b,H)\in B_\lambda\,].
\]
The cut $t^{*}$ is drawn from $A_\lambda$'s own coins, independently of $(b,H)$; so we may condition on $t^{*}$, apply \cref{lem:s3-formal} to the fixed index set $Q=\{1,\dots,t^{*}\}$, and average. Its bound is at most $(T+2)2^{-\lambda}$, uniform in $Q$, hence survives the averaging and caps $A_\lambda$'s success. By Step 4 below $A_\lambda$ makes $T=\poly(\lambda)$ queries, so the cap is negligible. Therefore
\[
  \P_{b,H}[(b,H)\in B_\lambda]\le 2(T+2)2^{-\lambda}=\negl(\lambda),
\]
which is exactly the joint bound of \cref{lem:hard}.

\smallskip
\textbf{Step 4 ($A_\lambda$ uses polynomially many queries).} $A_\lambda$ simulates $G$ through $t^{*}+1\le W=2(m+1)=\poly(\lambda)$ generation rounds.  By the per-output query cap,
\[
  T\le\sum_{i=0}^{W-1}q_G(\lambda,i)
  \le W\cdot\poly(\lambda,W)=\poly(\lambda).
\]
Each simulated output computation halts by \cref{def:model}; no bound on its internal running time is used or claimed.

\smallskip
The displayed bound proves the stated consequence; replacing the polynomial threshold by an exponential one gives the next proposition without any infinite-horizon inference.
\end{proof}

The reduction is not tied to a polynomial window.  With an exponential window, it makes the floor explicit and quantifies how rare an easy seed is.

\begin{proposition}[Joint seed--oracle exponential floor]\label[proposition]{prop:perseed-exp}
Let $G$ be a uniform generator making at most $q_G(\lambda,i)$ oracle queries while producing its $i$th output.  Fix an integer $d\ge0$ and a polynomial $r_G$ such that $q_G(\lambda,i)\le r_G(\lambda)(i+1)^d$, and let $0<c<1/(d+1)$.  Then there are $\gamma=\gamma(G,c)>0$ and $\lambda_0(G,c)$ with, for all $\lambda\ge\lambda_0$,
\[
  \P_{b,H}\!\left[
    \E_G\!\left[\mist_{<W_c}(G,\Lb,\sigma_b)\right]\le m_c
  \right]\;\le\;2^{-\gamma\lambda},
  \qquad m_c\defeq\lceil2^{c\lambda}\rceil,\quad W_c\defeq 2(m_c+1).
\]
\end{proposition}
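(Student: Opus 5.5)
The plan is to rerun the reduction from the proof of \cref{lem:hard} verbatim, with the polynomial threshold $m_p$ replaced by $m_c=\lceil2^{c\lambda}\rceil$ and the window $W_p$ by $W_c=2(m_c+1)$. Only the final query count changes, and it is now exponential.

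Fix a seed--oracle pair $(b,H)$ in the low-mistake set
\[
  B_\lambda^{(c)}\defeq\{(b,H):\E_G[\mist_{<W_c}(G,\Lb,\sigma_b)]\le m_c\}.
\]
By Step~2 of \cref{lem:hard}, the expected number of non-mistakes among the first $W_c$ outputs is at least $W_c-m_c=m_c+2$. Hence a uniform cut $t^{*}\in\{0,\dots,W_c-1\}$ lands on a non-mistake with probability at least $1/2$. By Step~1 (via (S1)), every non-mistake $a_{t^*}$ equals $E(b,j)$ for some $j>t^{*}$.

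The predictor $A_\lambda$ from Step~3 draws $t^{*}$ from its own coins and sets $Q=\{1,\dots,t^{*}\}$. Since $t^*$ is independent of $(b,H)$, we may condition on it and apply \cref{lem:s3-formal} to this fixed finite $Q$. Its success probability is then at least $\tfrac12\P_{b,H}[B_\lambda^{(c)}]$ and at most $(T+2)2^{-\lambda}$, so
\[
  \P_{b,H}[B_\lambda^{(c)}]\le 2(T+2)2^{-\lambda}.
\]

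The key step, and the only real obstacle, is the query accounting. $A_\lambda$ simulates at most $W_c$ output computations, so
\[
  T\le\sum_{i=0}^{W_c-1}q_G(\lambda,i)\le W_c\,r_G(\lambda)\,W_c^{d}\le r_G(\lambda)\,W_c^{d+1}.
\]
Since $W_c\le 2(2^{c\lambda}+2)\le 6\cdot2^{c\lambda}$ for $\lambda\ge1$, this gives
\[
  T\le 6^{d+1}r_G(\lambda)\,2^{c(d+1)\lambda}.
\]
The indexing convention matters here. If outputs are indexed from $0$, the factor $(i+1)^d$ is at most $W_c^d$. If they are indexed from $1$, the bound is $(W_c+1)^d$, which is still $O(2^{cd\lambda})$. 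Either way the bound is $2^{c(d+1)\lambda+O(\log\lambda)}$.

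Set $\varepsilon\defeq 1-c(d+1)>0$, using the hypothesis $c<1/(d+1)$. Then
\[
  2(T+2)2^{-\lambda}\le 2\cdot6^{d+1}r_G(\lambda)\,2^{-\varepsilon\lambda}+4\cdot2^{-\lambda}.
\]
Take $\gamma\defeq\varepsilon/2$. Because $r_G$ is a polynomial, $r_G(\lambda)\le 2^{\varepsilon\lambda/4}$ for all large $\lambda$, and the additive constants are absorbed likewise. So the right-hand side is at most $2^{-\gamma\lambda}$ for all $\lambda\ge\lambda_0(G,c)$, where $\lambda_0$ depends only on $d$, $r_G$, and $c$.

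Two details must also be checked. First, \cref{lem:s3-formal} allows any finite $T$, and $T$ here is a deterministic worst-case cap from \cref{def:query-cost}, so exponential $T$ is fine. Second, $|Q|\le W_c-1$ is finite and independent of $(b,H)$. No infinite-horizon or measurability issue arises beyond the ones already handled in \cref{lem:hard}.
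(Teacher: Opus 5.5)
Your proposal is correct and follows the paper's proof essentially verbatim: you reuse the dense-band stopping reduction of \cref{lem:hard} with $m_c,W_c$, bound the predictor's queries by $r_G(\lambda)W_c^{d+1}=2^{c(d+1)\lambda+O(\log\lambda)}$, and combine \cref{lem:s3-formal} with the factor two. The only difference is cosmetic: you fix $\gamma=(1-c(d+1))/2$ and spell out how $r_G$ and the $4\cdot2^{-\lambda}$ term are absorbed, whereas the paper allows any fixed $\gamma<1-c(d+1)$.
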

\begin{proof}
Instantiate the reduction of \cref{lem:hard} with $m(\lambda)=m_c$ in place of a polynomial; the derivation is window-agnostic and uses only $W=2(m+1)$ and the threshold $m$. The predictor makes $T\le W\,r_G(\lambda)(W+1)^d=2^{c(d+1)\lambda+o(\lambda)}$ queries, so $0<c<1/(d+1)$ gives $T<2^{\lambda-1}$ for large $\lambda$; by \cref{lem:s3-formal} the predictor wins with probability $\le(T+2)2^{-\lambda}\le2^{-[1-c(d+1)]\lambda+o(\lambda)}$. Including the factor two from the dense-band reduction, the easy seed--oracle event has probability at most $2^{-\gamma\lambda}$ for any fixed $\gamma<1-c(d+1)$.
\end{proof}

We next convert the joint seed--oracle bound into an almost-sure statement for a single realized oracle.

\begin{theorem}[Almost-sure exponential hardness]\label[theorem]{thm:as-exp}
There is an event $\mc{E}^{\ast}$ with $\P_H[\mc{E}^{\ast}]=1$ on which, for every uniform polynomial-query generator $G$, there are $c_G>0$ and $\Lambda(G,H)<\infty$ such that for every $\lambda\ge\Lambda(G,H)$ some seed $b\in\{0,1\}^{\lambda}$ has, for $W_G(\lambda)=2(\lceil2^{c_G\lambda}\rceil+1)$, $\E_G[\mist_{<W_G(\lambda)}(G,\Lb,\sigma_b)]>2^{c_G\lambda}$.  Thus the exponential worst-case mistake floor is incurred within an explicit exponential horizon.
\end{theorem}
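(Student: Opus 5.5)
The plan is to convert the joint seed--oracle bound of \cref{prop:perseed-exp} into an almost-sure statement by Markov's inequality and the first Borel--Cantelli lemma, one generator at a time, and then intersect over the countably many generators.

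\textbf{Fix $G$ and its constant.} Fix a uniform polynomial-query generator $G$. Choose an integer $d\ge0$ and a polynomial $r_G$ with $q_G(\lambda,i)\le r_G(\lambda)(i+1)^d$; such a choice exists because $q_G$ is a fixed polynomial. Set $c_G\defeq 1/(2(d+1))$, which lies in $(0,1/(d+1))$. \Cref{prop:perseed-exp} then supplies $\gamma=\gamma(G,c_G)>0$ and $\lambda_0$ such that, for $\lambda\ge\lambda_0$, $\P_{b,H}[(b,H)\in B_\lambda]\le2^{-\gamma\lambda}$. Here $B_\lambda$ is the low-mistake event $\E_G[\mist_{<W_G(\lambda)}]\le m_{c_G}$.

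Define $f_\lambda(H)\defeq 2^{-\lambda}\#\{b\in\{0,1\}^\lambda:(b,H)\in B_\lambda\}$. This is measurable, since $\E_G[\mist_{<W}]$ depends on finitely many oracle coordinates only through a finite window. Because $b$ is uniform and independent of $H$, Tonelli (\cref{fact:tonelli}) gives $\E_H[f_\lambda]=\P_{b,H}[B_\lambda]\le2^{-\gamma\lambda}$. Markov (\cref{fact:markov}) then yields $\P_H[f_\lambda\ge\lambda^{-2}]\le\lambda^22^{-\gamma\lambda}$. This sequence is summable in $\lambda$, so by \cref{fact:first-borel-cantelli} almost surely $f_\lambda(H)<\lambda^{-2}<1$ for all $\lambda\ge\Lambda(G,H)$. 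Whenever $f_\lambda<1$, some seed $b$ lies outside $B_\lambda$, and for that seed $\E_G[\mist_{<W_G(\lambda)}]>m_{c_G}=\lceil2^{c_G\lambda}\rceil\ge2^{c_G\lambda}$, which is the claimed floor.

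\textbf{Uniformity over $G$.} Uniform oracle generators are single oracle Turing machines, so there are countably many. Let $\mc E_G$ be the full-measure event obtained above for the polynomial-query generator $G$. If a machine is polynomial-query only on some oracles, we still restrict to those $G$ that satisfy \cref{def:query-cost} on every oracle, as the definition requires, and apply the argument to each. Put $\mc E^\ast\defeq\bigcap_G\mc E_G$; by \cref{fact:countable-intersection}, $\P_H[\mc E^\ast]=1$. The constant $c_G$ was fixed from $q_G$ alone, before $H$ was chosen, so it does not depend on the oracle.

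\textbf{Expected obstacle.} The delicate point is measurability and legitimacy. \Cref{prop:perseed-exp} is stated for a generator that halts, whereas \cref{def:eff} demands halting on every oracle. The quantity $\E_G[\mist_{<W}]$ involves only the finitely many queries of a halting computation over $W$ rounds. I would argue that it is a measurable function of $H$ by writing it as a countable sum over finite transcripts. The remaining steps are routine summability.
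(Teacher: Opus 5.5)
Your proposal is correct and follows the paper's proof essentially step for step. It uses the same choice $c_G=1/(2(d+1))$, the joint bound from \cref{prop:perseed-exp}, Tonelli to identify $\E_H[f_\lambda]$ with the joint probability, Markov at threshold $\lambda^{-2}$, the first Borel--Cantelli lemma, extraction of a seed outside the easy set once $f_\lambda<1$, and a countable intersection over uniform generators, with the measurability point handled as you suggest via finite-transcript cylinder events (\cref{app:oracle-space}).
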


\begin{proof}
\textbf{Setup.} Fix a uniform polynomial-query generator $G$.  Choose an integer $d_G\ge0$ and a polynomial $r_G$ witnessing
\[
  q_G(\lambda,i)\le r_G(\lambda)(i+1)^{d_G},
\]
and set
\[
  c_G\defeq\frac{1}{2(d_G+1)},\qquad
  m_G(\lambda)\defeq\left\lceil2^{c_G\lambda}\right\rceil,\qquad
  W_G(\lambda)\defeq2\bigl(m_G(\lambda)+1\bigr).
\]
For a length-$\lambda$ seed--oracle pair, define
\[
  Z_{\lambda,G}(b,H)
  \defeq
  \E_G\!\left[\mist_{<W_G(\lambda)}(G,\Lb,\sigma_b)\right],
  \qquad
  \mathsf{Easy}_{\lambda,G}
  \defeq
  \bigl\{(b,H):Z_{\lambda,G}(b,H)\le m_G(\lambda)\bigr\}.
\]
For a fixed oracle, let
\[
  f_{\lambda,G}(H)
  \defeq
  \P_b\!\left[(b,H)\in\mathsf{Easy}_{\lambda,G}\right].
\]
Thus $f_{\lambda,G}(H)$ is exactly the fraction of length-$\lambda$ seeds that are easy for the realized oracle $H$.  We now convert the joint seed--oracle estimate into a fixed-oracle worst-case statement in five explicit stages.

Let $\gamma_G>0$ and $\lambda_0(G)$ be supplied by \cref{prop:perseed-exp} with $c=c_G$.

\smallskip
\textbf{Step 1 (joint easy-pair bound).}\enspace Because $c_G<1/(d_G+1)$, for every $\lambda\ge\lambda_0(G)$,
\[
  \P_{b,H}\!\left[\mathsf{Easy}_{\lambda,G}\right]
  \le 2^{-\gamma_G\lambda}.
\]

\smallskip
\textbf{Step 2 (average easy-seed fraction).}\enspace Since $f_{\lambda,G}(H)$ is the conditional probability of $\mathsf{Easy}_{\lambda,G}$ after fixing $H$, Tonelli's theorem (\cref{fact:tonelli}) gives
\[
  \E_H\!\left[f_{\lambda,G}(H)\right]
  =\P_{b,H}\!\left[\mathsf{Easy}_{\lambda,G}\right]
  \le2^{-\gamma_G\lambda}.
\]
The random seed is used only to measure this fraction; after this step, the exceptional event is an event over the oracle alone.

\smallskip
\textbf{Step 3 (few oracles have many easy seeds).}\enspace Markov's inequality (\cref{fact:markov}) at threshold $\lambda^{-2}$ yields
\[
  \P_H\!\left[f_{\lambda,G}(H)\ge\lambda^{-2}\right]
  \le \lambda^2\E_H\!\left[f_{\lambda,G}(H)\right]
  \le \lambda^2 2^{-\gamma_G\lambda}.
\]

\smallskip
\textbf{Step 4 (fix almost every oracle).}\enspace The bound is summable:
\[
  \sum_{\lambda\ge\lambda_0(G)}\lambda^2 2^{-\gamma_G\lambda}<\infty.
\]
Hence the first Borel--Cantelli lemma (\cref{fact:first-borel-cantelli})---which requires no independence across lengths---gives a probability-one event
\[
  \mc E_G
  \defeq
  \left\{H:
  f_{\lambda,G}(H)<\lambda^{-2}
  \text{ for all sufficiently large }\lambda\right\}.
\]

\smallskip
\textbf{Step 5 (extract a hard seed and make the conclusion simultaneous).}\enspace Fix $H\in\mc E_G$ and choose $\Lambda(G,H)$ so that, for every $\lambda\ge\Lambda(G,H)$, one has $\lambda\ge\max\{\lambda_0(G),2\}$ and $f_{\lambda,G}(H)<\lambda^{-2}<1$.  More than a $1-\lambda^{-2}$ fraction of length-$\lambda$ seeds then lie outside $\mathsf{Easy}_{\lambda,G}$; in particular, some $b=b(\lambda,G,H)$ satisfies
\[
  Z_{\lambda,G}(b,H)>m_G(\lambda)\ge2^{c_G\lambda}.
\]
By the definition of $Z_{\lambda,G}$, this is the required expected mistake floor within $W_G(\lambda)=2(\lceil2^{c_G\lambda}\rceil+1)$ outputs.

Finally, uniform polynomial-query oracle machines have finite descriptions and therefore form a countable family.  For each such $G$, fix the choices above before fixing $H$ and construct the corresponding full-measure event $\mc E_G$.  The countable full-measure intersection (\cref{fact:countable-intersection}) gives
\[
  \P_H[\mc E^\ast]=1,
  \qquad
  \mc E^\ast\defeq\bigcap_G\mc E_G.
\]
On this single event, the preceding conclusion holds for every $G$ with its own $c_G$ and $\Lambda(G,H)$.
\end{proof}

Combining this almost-sure hardness event with information-theoretic easiness gives the main theorem.

\begin{theorem}[The computational price of generation]
\label[theorem]{thm:main}
Relative to a random oracle $H$, the collection $\Cstar$ satisfies, almost surely over $H$:
\begin{enumerate}
\item[\textup{(i)}] \textbf{\textup{(Recognizable.)}}\enspace Given the seed $b$, membership ``$x\in\Lb$?'' is decidable in $\poly(|b|+|x|)$ time with a single query to $H$; each $\Lb$ is infinite and, given $b$, polynomial-time enumerable with oracle access to $H$.
\item[\textup{(ii)}] \textbf{\textup{(Information-theoretically easy.)}}\enspace $\Cdim(\Cstar)=0$; using the common lengthwise intersection, a computationally unbounded generator makes \emph{zero} mistakes on every target $\Lb$, under every complete distinct enumeration.
\item[\textup{(iii)}] \textbf{\textup{(Computationally hard.)}}\enspace For every uniform polynomial-query generator $G$ there is $c_G>0$ such that, at all but finitely many lengths $\lambda$, some seed $b\in\{0,1\}^{\lambda}$ satisfies
  \[
    \E_G\!\left[\mist_{<W_G(\lambda)}(G,\Lb,\sigma_b)\right]>2^{c_G\lambda},
    \qquad W_G(\lambda)=2(\lceil2^{c_G\lambda}\rceil+1),
  \]
where $\sigma_b$ is the canonical complete enumeration.
\end{enumerate}
In particular $\Cstar$ separates zero information-theoretic mistakes from a $2^{\Omega(\lambda)}$ worst-case expected finite-horizon mistake floor for every fixed uniform polynomial-query (a fortiori polynomial-time) generator.  The exponent is generator-dependent, and the statement is relative to the random oracle.
\end{theorem}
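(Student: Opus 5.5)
The plan is to assemble \cref{thm:main} from the three ingredients already established, placing them on a single full-measure oracle event. I would take the event $\mc E_{\mathrm{agr}}$ from the proof of (S2) in \cref{lem:construction} and the event $\mc E^\ast$ from \cref{thm:as-exp}. Each has probability one, so by the countable full-measure intersection (\cref{fact:countable-intersection}) the event $\mc E_{\mathrm{agr}}\cap\mc E^\ast$ also has probability one. I would then verify (i)--(iii) for every $H$ in this intersection.

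For (i), the argument is deterministic and holds for every oracle, so it follows from (S1). Membership in $\Lb$ is one query plus polynomial-time syntactic checks, and $\Lb$ is infinite because distinct indices give distinct tagged tuples. The only addition beyond (S1) is enumerability. Given $b$, the procedure emits $E(b,1),E(b,2),\ldots$ by querying $H(b,i)$ in increasing $i$. The $i$th element costs one query and time polynomial in $|b|+\log i+\lambda$, which is the intended sense of polynomial-time enumerable.

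For (ii), on $\mc E_{\mathrm{agr}}$ the proof of (S2) gives $\Cdim(\Cstar)=0$ directly. \Cref{lem:easy}, whose proof also works on $\mc E_{\mathrm{agr}}$, supplies \cref{alg:zero-mistake} with zero mistakes on every target and every complete distinct enumeration. The cold start is handled by the lengthwise agreement set $J_{\{0,1\}^\lambda}$, as in \cref{rem:cold-start}. I would note that halting of \cref{alg:zero-mistake} on every valid history is exactly what $\mc E_{\mathrm{agr}}$ guarantees, so the procedure is a legitimate generator on this event.

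For (iii), on $\mc E^\ast$ \cref{thm:as-exp} gives, for each uniform polynomial-query $G$, a constant $c_G>0$ and a threshold $\Lambda(G,H)$ beyond which some seed exceeds $2^{c_G\lambda}$ expected mistakes within $W_G(\lambda)$ outputs. Here ``all but finitely many lengths'' is a restatement of $\lambda\ge\Lambda(G,H)$.

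The closing sentence of the theorem needs a short justification, since it claims the bound a fortiori for polynomial-time generators. The point is that a polynomial-time machine can make at most polynomially many queries per output, so it is polynomial-query and the bound applies. The only real point of care is that both conclusions sit on the same oracle; there is no genuine obstacle, because both events were constructed without conditioning on each other and the intersection argument needs no independence.
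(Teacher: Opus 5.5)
Your proposal is correct and follows essentially the same route as the paper: (i) comes from (S1) for every oracle, (ii) from \cref{lem:easy} on a probability-one easiness event, and (iii) from the event $\mc E^\ast$ of \cref{thm:as-exp}, with one intersection of probability-one events putting all three parts on the same oracle. The only cosmetic difference is the choice of easiness event. The paper uses $\mc E_{\mathrm{easy}}$ (defined by $\Cdim(\Cstar)=0$ plus infinite lengthwise intersections), whereas you use the smaller event $\mc E_{\mathrm{agr}}$; it implies $\mc E_{\mathrm{easy}}$ and is where \cref{lem:easy} is actually proved, so either choice works.
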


\begin{proof}
Part~(i) is (S1) of \cref{lem:construction} and holds for every oracle.  Let $\mc{E}_{\mathrm{easy}}$ be the event on which $\Cdim(\Cstar)=0$ and, for every $\lambda$, the intersection of all length-$\lambda$ graph languages is infinite.  Property~(S2) and the first paragraph of \cref{lem:easy}'s proof show $\P_H[\mc{E}_{\mathrm{easy}}]=1$.  On this event, \cref{lem:easy} gives the unbounded generator zero mistakes, proving part~(ii).  The event $\mc E^*$ of \cref{thm:as-exp} has probability one and gives part~(iii) simultaneously for every uniform polynomial-query generator.  Their intersection $\mc{E}_{\mathrm{easy}}\cap\mc E^*$ still has probability one, so all three parts hold for the same realized oracle.
\end{proof}

\begin{remark}[The exponent is per-generator]\label[remark]{rem:perG-const}
Under the per-output budget, the admissible $c_G<1/(d+1)$ depends on $G$ through the degree $d$ of its round-index dependence.  As $d$ is unbounded across generators, no single constant serves every $G$: ``$2^{\Omega(\lambda)}$'' is inherently per-$G$, exactly as \cref{thm:main}(iii) states.  If per-output queries are $\poly(\lambda)$ independently of the round index, then $d=0$ and any $c<1$ is admissible. \Cref{prop:no-universal-exponent} shows that generator dependence is necessary, not merely an artifact of this proof.
\end{remark}

\begin{remark}[Quantifier order]\label[remark]{rem:notclaimed}
Here and below, $\forall^{\infty}\lambda$ means ``for all sufficiently large $\lambda$.'' The lower bound in \cref{thm:main}(iii) has the order
{\small
\[
  \P_H\!\left[
  \begin{gathered}
    \forall\text{ uniform polynomial-query }G\;
    \exists c_G>0\;
    \forall^{\infty}\lambda\\[-1pt]
    \exists b\in\{0,1\}^{\lambda}:\quad
    \E_G\!\left[\mist_{<W_G(\lambda)}(G,\Lb,\sigma_b)\right]
      >2^{c_G\lambda}
  \end{gathered}
  \right]=1,
\]
}
where $W_G(\lambda)=2(\lceil2^{c_G\lambda}\rceil+1)$.  Thus the probability-one oracle event is shared by all uniform machines, while the exponent and the hard target may depend on $G$ (and the target also on $\lambda$).  In particular, the order $\forall G\,\forall^{\infty}\lambda$ is distinct from $\exists^{\infty}\lambda\,\forall G$, which would require one infinite length set to work for all generators at once; the two orders are incomparable.
\end{remark}

\pagebreak[3]
\begin{remark}[Average case and target information]\label[remark]{rem:average}
The joint bound in \cref{prop:perseed-exp} makes the easy seed--oracle pairs exponentially rare.  After fixing an almost-sure oracle by Markov and Borel--Cantelli, the proof yields a $1-o(1)$ hard-seed fraction (specifically $1-\lambda^{-2}$) for every fixed uniform generator at all large lengths.  A target-aware generator given $b$ can enumerate $\Lb$ with no post-start mistakes.  This observation does not settle hardness against target-independent per-length advice or circuit families.
\end{remark}

\section{Beyond the canonical enumeration}
\label[section]{sec:beyond-canonical}

The canonical ascending enumeration used in \cref{lem:hard}, and hence in \cref{thm:main}, is convenient, but the order itself is not load-bearing.  By property~\textup{(S1)} of \cref{lem:construction}, every complete distinct enumeration of $\Lb$ is induced by a permutation of its index set.  What the hardness reduction needs is not the identity permutation, but an index order fixed independently of the hidden target and oracle.

\begin{definition}[Precommitted index schedule]
\label[definition]{def:precommitted-schedule}
At length $\lambda$, a \emph{precommitted index schedule} is a possibly random permutation
\[
  \Pi_\lambda=(\Pi_{\lambda,1},\Pi_{\lambda,2},\ldots)
\]
of $\N$ that is independent of the hidden seed $b$, the random oracle $H$, and the private coins of $G$.  It induces the complete distinct enumeration
\[
  \sigma_{b,\Pi}
  =
  \bigl(E(b,\Pi_{\lambda,1}),
        E(b,\Pi_{\lambda,2}),\ldots\bigr).
\]
A schedule family $\Pi=(\Pi_\lambda)_\lambda$ is \emph{uniform} if it is produced by one oracle-free fair-coin probabilistic Turing machine which, on input $1^\lambda$ and a private random tape, outputs every finite prefix of $\Pi_\lambda$ in finite time.  The scheduler's tape is independent of $(b,H)$ and of $G$'s coins.  Deterministic permutations are point-mass special cases.
\end{definition}

Only the index order is independent of $(b,H)$: the enumerated strings $E(b,\Pi_{\lambda,t})$ necessarily depend on both.

\begin{restatable}[Schedule-robust exponential floor]{theorem}{ScheduleRobustness}
\label[theorem]{thm:schedule-robust}
Let $G$ be a uniform generator satisfying
\[
  q_G(\lambda,i)\le r_G(\lambda)(i+1)^d
\]
for a polynomial $r_G$ and an integer $d\ge0$.  Fix $0<c<1/(d+1)$ and put
\[
  m_c=\lceil2^{c\lambda}\rceil,
  \qquad
  W_c=2(m_c+1).
\]
There are $\gamma=\gamma(G,c)>0$ and $\lambda_0(G,c)$, independent of the schedule, such that every precommitted schedule family $\Pi$ satisfies, for all $\lambda\ge\lambda_0$,
\[
  \P_{b,H}\!\left[
    \E_{G,\Pi}\!\left[
      \mist_{<W_c}(G,\Lb,\sigma_{b,\Pi})
    \right]\le m_c
  \right]
  \le 2^{-\gamma\lambda}.
\]
Consequently, for each fixed precommitted schedule family $\Pi$, almost surely over $H$, every sufficiently large $\lambda$ has some $b\in\{0,1\}^{\lambda}$ such that
\[
  \E_{G,\Pi}\!\left[
    \mist_{<W_c}(G,\Lb,\sigma_{b,\Pi})
  \right]
  >m_c\ge2^{c\lambda}.
\]

Moreover, let $\mathfrak S_{\mathrm{unif}}$ be the countable class of uniform precommitted schedulers.  Then
\[
  \P_H\!\left[
    \begin{gathered}
      \forall\text{ uniform polynomial-query }G\;
      \exists c_G>0\;
      \forall\Pi\in\mathfrak S_{\mathrm{unif}}\;
      \forall^{\infty}\lambda\;
      \exists b\in\{0,1\}^{\lambda}:\\[-2pt]
      \E_{G,\Pi}\!\left[
        \mist_{<W_G(\lambda)}(G,\Lb,\sigma_{b,\Pi})
      \right]>2^{c_G\lambda}
    \end{gathered}
  \right]=1,
\]
where $W_G(\lambda)=2(\lceil2^{c_G\lambda}\rceil+1)$.  The exponent $c_G$ may be selected from a query bound for $G$ alone; the cutoff implicit in $\forall^{\infty}\lambda$ may depend on $(G,\Pi,H)$.
\end{restatable}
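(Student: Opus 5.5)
The proof reruns the dense-band reduction of \cref{lem:hard} and \cref{prop:perseed-exp}, now with the schedule folded into the predictor's randomness. Lemma~\cref{lem:s3-formal} allows any \emph{fixed} finite index set $Q$ chosen independently of $(b,H)$. Since $\Pi_\lambda$ is independent of $(b,H)$ and of $G$'s coins, we first condition on a realization $\pi$ of the schedule prefix $(\Pi_{\lambda,1},\dots,\Pi_{\lambda,W_c-1})$ and on the cut $t^*$. The relevant index set is then $Q=\{\pi_1,\dots,\pi_{t^*}\}$, which is a fixed finite set of size at most $W_c-1$. So the plan is to prove the joint bound pointwise in $\pi$, with constants not depending on $\pi$, and then average over $\Pi$.

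\textbf{Step 1 (fresh-index translation).} By (S1), a non-mistake $a_t$ under $\sigma_{b,\Pi}$ has the form $E(b,j)$ with $j\notin\{\pi_1,\dots,\pi_t\}$. \textbf{Step 2 (dense band).} On the set $B_\lambda=\{(b,H):\E_{G,\Pi}[\mist_{<W_c}]\le m_c\}$, a uniform $t^*\in\{0,\dots,W_c-1\}$ gives non-mistake probability at least $1/2$, averaged over $G$'s coins and $\Pi$; this is the computation \eqref{eq:band}. \textbf{Step 3 (predictor).} The predictor $A_\lambda$ samples $\Pi$ from the scheduler's tape (its own coins, independent of $(b,H)$), samples $t^*$, and requests samples on $Q=\{\Pi_{\lambda,1},\dots,\Pi_{\lambda,t^*}\}$. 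It then simulates $G$ on the induced prefix histories and outputs $a_{t^*}$. The scheduler only needs to output a finite prefix, which the uniformity definition provides in finite time; since running time is uncharged, and the precommitted definition even allows non-uniform $\Pi$ for this joint bound, only oracle queries matter.

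\textbf{Step 4 (query count).} The query count is $T\le W_c r_G(\lambda)(W_c+1)^d=2^{c(d+1)\lambda+o(\lambda)}$, exactly as in \cref{prop:perseed-exp}, and this bound is independent of $\Pi$. Applying \cref{lem:s3-formal} conditionally on $(\Pi,t^*)$ gives success probability at most $(T+2)2^{-\lambda}$, uniformly in $Q$. Averaging and multiplying by the factor two from Step 2 yields $\P[B_\lambda]\le 2^{-\gamma\lambda}$ with $\gamma<1-c(d+1)$ and a $\lambda_0$ depending only on $(G,c)$.

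\textbf{Main obstacle.} The step I expect to be delicate is justifying the conditioning in Step 4. Lemma~\cref{lem:s3-formal} requires $Q$ to be independent of $(b,H)$ and $A$'s coins to be independent of $(b,H)$. I would handle this by the product-space structure: $(\Pi,t^*,\rho_G)$ is jointly independent of $(b,H)$. Tonelli (\cref{fact:tonelli}) then lets me fix $(\pi,t^*)$, apply the lemma with $A$ being $G$'s simulation (whose coins are $\rho_G$), and integrate. I also need to check that $\E_{G,\Pi}$ inside $B_\lambda$ is a function of $(b,H)$ only, so that the band argument averages over $\Pi$ correctly. It does, because $\Pi$ is integrated out.

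\textbf{Almost-sure consequences.} For a fixed $\Pi$, the almost-sure statement follows verbatim from Steps 2--5 of \cref{thm:as-exp}: Tonelli, then Markov at $\lambda^{-2}$, then the first Borel--Cantelli lemma, then extracting a seed outside the easy set. For the simultaneous statement, take $c_G=1/(2(d_G+1))$, chosen from $G$'s query bound alone. Each pair $(G,\Pi)$ with $G$ uniform polynomial-query and $\Pi\in\mathfrak S_{\mathrm{unif}}$ yields a full-measure event $\mc E_{G,\Pi}$, and there are countably many such pairs. Intersecting them by \cref{fact:countable-intersection} gives the displayed quantifier order, with the cutoff depending on $(G,\Pi,H)$.
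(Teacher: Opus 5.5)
Your proposal is correct and follows essentially the same route as the paper. The paper likewise uses random stopping to extract a correct fresh point with probability at least $1/2$ on the low-mistake event, and conditions on the stopping round and the realized ordered schedule prefix so that \cref{lem:s3-formal} applies with a fixed index set and $2^{c(d+1)\lambda+o(\lambda)}$ queries, then runs the Tonelli--Markov--Borel--Cantelli conversion with a countable intersection over pairs $(G,\Pi)$. Two points are worth stating explicitly: what is uniform in the conditioned prefix is the predictor's success bound, not the low-mistake event (which averages over $\Pi$, as your Step~2 correctly uses), and the hard-coded ordered prefix lets the predictor rebuild the ordered history from the unordered sample set through the literal index fields.
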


Combining \cref{thm:schedule-robust} with \cref{thm:main}(ii) places the two regimes on one probability-one oracle event: unbounded generation still makes zero mistakes under every complete distinct enumeration, while every uniform polynomial-query generator has an exponential worst-case mistake floor simultaneously for all uniform precommitted schedulers.

\begin{proof}[Proof idea]
For a uniformly random $t^*\in\{0,\ldots,W-1\}$, the stopped output satisfies
\[
  \P_{G,\Pi,t^*}[a_{t^*}\text{ is a non-mistake}]
  =
  1-\frac1W
  \E_{G,\Pi}\!\left[
    \mist_{<W}(G,\Lb,\sigma_{b,\Pi})
  \right]
\]
for every fixed $(b,H)$.  Conditioned on $t^*$ and the realized schedule prefix, precommitment makes its index set fixed and independent of $(b,H)$. The stopped simulation is therefore admissible for \cref{lem:s3-formal} and uses at most $T_W=\sum_{i=0}^{W-1}q_G(\lambda,i)$ queries.  Consequently,
\[
  \P_{b,H}\!\left[
    \E_{G,\Pi}[\mist_{<W}]\le m
  \right]
  \le
  \frac{T_W+1}{1-m/W}\,2^{-\lambda}.
\]
For $m=m_c$ and $W=W_c$, one has $1-m/W>1/2$ and $T_W=2^{c(d+1)\lambda+o(\lambda)}$.  The remaining Tonelli--Markov--Borel--Cantelli conversion is identical in structure to \cref{thm:as-exp}.  The complete proof, including conditioning on the ordered prefix, appears in \cref{app:schedule-proofs}.
\end{proof}

\begin{remark}[Scope of schedule robustness]
\label[remark]{rem:schedule-scope}
Precommitment is essential.  The theorem does not cover target-dependent, oracle-dependent, or sample-adaptive index orders: a target-dependent first index can encode $b$, after which one oracle query per round suffices to generate fresh target points.  Simultaneous coverage is therefore only over the countable class of uniform precommitted schedulers, not all abstract permutations.  For randomized schedulers, the expectation includes both scheduler and generator coins.
\end{remark}

\section{Discussion}
\label[section]{sec:discussion}

The separation exposes a distinction invisible to closure dimension alone: safe outputs may exist abundantly yet remain computationally hard to locate.  The random-function graph makes the gap exact---unbounded search exploits recurring common points, while sparse queries leave a fresh target value hidden.

\paragraph{What the lower bound says.} The theorem does not assert eventual failure.  Generation in the limit imposes no computational bound~\citep{kleinberg2024generation}; our finite horizon instead measures the mistakes accumulated before an efficient generator can locate fresh target points.  This differs from rank-dependent output deadlines~\citep{ganjuetal2026timesensitive}: the horizon scores cumulative mistakes under a query budget and does not rank target strings.

\paragraph{Scope of the mechanism.} The theorem ranges over uniform machines; giving the target seed $b$ to the generator trivializes the instance, while target-independent advice or circuits remain open.  Its closure argument uses \emph{infinite accidental agreement}, not code distance: every fixed finite set of consistent rows agrees infinitely often almost surely.  Finally, the model is feedback-free.  Unlike online learners or generators with explicit mistake or query feedback~\citep{hannekeetal2026feedback}, the generator never learns whether an output was valid.  Computable online learning exhibits a related combinatorial-versus-computational gap in a different prediction game \citep{hasratibendavid2023computable}.

\paragraph{Broader resource interactions.} Other constraints can change not only whether generation is possible, but also the computational cost of locating the next acceptable output.  Privacy, corrupted or relational evidence, and distinctions among valid outputs lead to three concrete extensions of the present query-complexity question.

\emph{Privacy.}  Privacy constrains how the observed stream may influence released outputs.  In an agnostic statistical formulation, privacy--accuracy tradeoffs are quantified through error-rate exponents \citep{lixiaoyuetal2026privacy}.  Under continual release, every countable class admits $\varepsilon$-private generation, yet uniform private generation for some size-$k$ classes needs $\Omega(k/\varepsilon)$ samples \citep{mehrotraetal2026dp}.  These results suggest a three-way frontier among privacy, cumulative mistakes, and the cost of producing each output.  Does the zero-versus-exponential separation persist under differential privacy, and how should it scale with the privacy parameters?

\emph{Corrupted and relational evidence.}  Corruption changes the evidence rather than the output criterion.  Finite insertions can be benign for some collections yet separate clean from noisy generatability for others \citep{ramanraman2025noisy,baipanigrahizhang2026noise,liaronzhang2026noise}; vanishing-rate infinite contamination gives another robustness regime \citep{mehrotraetal2025contamination}.  Contrastive presentations replace positive examples by unoriented crossing pairs and are characterized by a contrastive closure dimension, with distinct behavior under corruption \citep{lixiaoyuetal2026contrastive}.  Under a query budget, can corruption amplify the cost of locating a safe point even when generation remains information-theoretically possible, or can relational evidence make such points easier to find?

\emph{Value-sensitive generation.}  Our objective treats every fresh target point equally.  In a nested model of valuable mathematics, stronger coverage can require infinitely many valid but trivial outputs, even when their asymptotic rate vanishes \citep{lixiaoyuetal2026floodharvest}.  Verifier-assisted constrained generation has a distinct query-complexity theory in which process-verifier access can turn otherwise intractable sampling tasks into tractable ones \citep{bottaetal2025verifier}.  Once verifier queries and computation are budgeted, what resources are required to locate valuable rather than merely valid outputs?

The most direct extension of the theorem is to replace the random oracle without losing either side of the separation.

\subsection{What would a standard-model separation require?}
\label[subsection]{sec:standard-model-requirements}

A direct standard-model port of the graph-and-stopping argument would follow from one public, deterministic family with five properties.  These conditions are sufficient for this proof strategy, not necessary for every possible separation.

\begin{enumerate}
\item[\textup{(R1)}] \textbf{\textup{(Efficient public evaluation.)}}\enspace A uniform algorithm computes each row value in time polynomial in $\lambda+\log i$; hardness does not come from hiding a language once its seed is known.

\item[\textup{(R2)}] \textbf{\textup{(Infinite version-space common points.)}}\enspace After every nonempty finite realizable history, the consistent seeds share infinitely many graph points, allowing an unbounded generator to avoid the finite revealed set.

\item[\textup{(R3)}] \textbf{\textup{(Cold-start agreement.)}}\enspace Before the first example, all seeds of one length share a graph point.  This extra property, not closure dimension alone, removes the possible initial mistake.

\item[\textup{(R4)}] \textbf{\textup{(Quantitative fresh-point unpredictability.)}}\enspace For every independently chosen finite sample-index set, a seed-oblivious predictor using resource $T$ finds a fresh target point with probability at most $\varepsilon_\lambda(T)$, uniformly over the set and its supplied order.

\item[\textup{(R5)}] \textbf{\textup{(One object against all uniform generators.)}}\enspace One public family satisfies the structural and security properties against every uniform generator; the exponent and cutoff may depend on the generator, but the family may not.
\end{enumerate}

The security scale in \textup{(R4)} is load-bearing.  For $m=\lceil2^{c\lambda}\rceil$, the stopped predictor used by the reduction simulates $W=2(m+1)$ outputs.  If output $i$ costs at most $\poly(\lambda)(i+1)^d$, the simulation may require up to
\[
  T_W \le 2^{c(d+1)\lambda+o(\lambda)}.
\]
Security only for $T=\poly(\lambda)$ can support a polynomial-window reduction, but not the exponential mistake floor.  A sufficient scale is $\varepsilon_\lambda(T)\le 2^{-\eta_\alpha\lambda}$ uniformly for $T\le 2^{\alpha\lambda}$, for some $\alpha>c(d+1)$ and $\eta_\alpha>0$, for all sufficiently large $\lambda$.

Finite-block pseudorandom error-correcting codes have the relevant hidden-codeword flavor \citep{christgunn2024pseudorandom}, but do not supply the required infinite-index common-point property.  An infinite family with uniformly bounded pairwise agreement could instead force finite closure dimension; without \textup{(R2)}--\textup{(R3)}, however, that route would not reproduce the zero-mistake side.  No standard-model family satisfying the checklist is asserted here.

\subsection{Other open problems}

Two further directions remain open.  The first is hardness for a natural class such as regular or context-free languages; current cryptographic hardness for next-symbol prediction of regular languages \citep{bhattamishra2025nsphardness} does not immediately transfer to positive-only generation.  The second is a structural theory of efficient closure generation: the right notion must distinguish an infinite closure from an efficiently searchable one.

The random-oracle theorem supplies a clean test case for all three questions.  For this construction, it pinpoints the unbounded search hidden in the information-theoretic closure argument and quantifies the resulting mistake cost under bounded-query access.

\paragraph{AI Disclosure.} The authors acknowledge the use of GPT-5.6 as an assistive tool in preparing this manuscript, including support in drafting and polishing the exposition of detailed proof steps, generating the cover-page illustration (\cref{fig:separation-at-a-glance}), and preparing several TikZ figures intended to aid readers' understanding.  All AI-assisted material was carefully reviewed, edited, and validated by the authors, who take full responsibility for the final manuscript and its results.

\begingroup
\setlength{\bibsep}{2pt plus .3ex}
\bibliographystyle{alpha-arxiv}
\bibliography{refs}
\endgroup

\clearpage
\appendix
\begin{center}
\Huge{Appendix}
\end{center}

\paragraph{Organization of the appendix.} \Cref{app:oracle-space} formalizes the probability space and measurability conventions. \Cref{app:schedule-proofs} proves the schedule-robust extension, and \cref{app:exponent-necessity} shows why its exponent must depend on the generator. \Cref{app:worked-generation} closes with a concrete execution of the zero-mistake algorithm.

\section{Random-oracle probability space and measurability}
\label[section]{app:oracle-space}

All almost-sure statements use one product space.  Let
\[
  \mc D\defeq
  \bigcup_{\lambda\ge1}\bigl(\{0,1\}^{\lambda}\times\N\bigr),
  \qquad
  \Omega_H\defeq
  \prod_{(b,i)\in\mc D}\{0,1\}^{|b|}.
\]
We equip each finite factor with the uniform measure and equip $\Omega_H$ with the resulting product sigma-algebra and product measure.  A point $\omega=(\omega_{b,i})_{(b,i)\in\mc D}$ defines
\[
  H_\omega(b,i)\defeq\omega_{b,i}.
\]
Thus one oracle is sampled simultaneously at every seed length.  If $(b_1,i_1),\ldots,(b_k,i_k)$ are distinct inputs and $y_r\in\{0,1\}^{|b_r|}$, then
\[
  \P_H\!\left[H(b_r,i_r)=y_r\text{ for every }1\le r\le k\right]
  =2^{-\sum_{r=1}^k|b_r|}.
\]
This coordinate identity supplies every oracle-coordinate independence assertion used in \cref{lem:construction,lem:s3-formal}.

\paragraph{Measurability.} A finite oracle transcript determines a cylinder event.  An event specified by a halting finite-query execution is a countable union of such transcript cylinders, and a finite-horizon mistake event is therefore measurable.  The halting event for the unbounded search in \cref{alg:zero-mistake} is likewise the union, over finite scan cutoffs, of cylinder events. More explicitly, the finite-horizon mistake count is a bounded jointly measurable function of $(H,\rho_G)$.  Integrating over $\rho_G$ therefore gives a measurable function $H\mapsto\E_G[\mist_{<W}]$ by Tonelli's theorem.  Hence the threshold events used to define $f_{\lambda,G}$ and $\mc E_G$ are measurable; eventual-in-$\lambda$ statements use only countable unions and intersections of such events. Randomized machines may be supplied with an independent fair-coin product space $\Omega_G=\{0,1\}^{\N}$; expectations denoted by $\E_G$ integrate only over this additional coordinate after the seed and oracle have been fixed.

\section{Schedule-robust finite-horizon reduction}
\label[section]{app:schedule-proofs}

The random-stopping identity below is valid for any random distinct schedule.  Precommitment enters only when the stopped experiment is interpreted as a fresh-point predictor.

\begin{lemma}[Low mistakes yield a fresh prediction]
\label[lemma]{lem:random-stopping}
Fix $\lambda$, a seed--oracle pair $(b,H)$, and an integer $W\ge1$.  For a possibly random distinct index sequence $\Pi=(\Pi_1,\Pi_2,\ldots)$, put
\[
  x_t=E(b,\Pi_t),
  \qquad
  Q_t=\{\Pi_1,\ldots,\Pi_t\},
  \qquad
  Q_0=\varnothing,
\]
and let $\sigma_{b,\Pi}=(x_1,x_2,\ldots)$.  Suppose that, over all randomness governing $G$ and $\Pi$,
\[
  \E_{G,\Pi}\!\left[
    \mist_{<W}(G,\Lb,\sigma_{b,\Pi})
  \right]\le m
\]
for some $0\le m<W$.

Draw $t^*$ uniformly from $\{0,\ldots,W-1\}$, independently of all other randomness, and consider the stopped output $a_{t^*}$ under the ordered prefix $(x_1,\ldots,x_{t^*})$.  Then
\[
  \P_{G,\Pi,t^*}\!\left[
    a_{t^*}=E(b,j)\ \text{for some }j\notin Q_{t^*}
  \right]
  \ge 1-\frac{m}{W}.
\]
This probability statement requires no independence assumption on $\Pi$.  If, in addition, the schedule prefix can be sampled without oracle queries independently of $(b,H)$ and of $G$'s coins, then the stopped experiment is an admissible predictor for \cref{lem:s3-formal}.  Alternatively, after conditioning on a realized precommitted ordered prefix, that finite prefix can be hard-coded into an admissible predictor.  If producing output $a_i$ uses at most $q_G(\lambda,i)$ oracle queries, its simulation of $G$ uses at most
\[
  T_W\defeq\sum_{i=0}^{W-1}q_G(\lambda,i)
\]
oracle queries.
\end{lemma}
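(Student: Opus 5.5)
The plan is to redo Steps 1--2 of \cref{lem:hard}, with the identity schedule replaced by an arbitrary distinct index sequence $\Pi$, and then check admissibility and the query count separately.

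\textbf{Step 1 (a non-mistake is a fresh prediction).} The first step is pointwise and uses (S1) of \cref{lem:construction}. For every realization of $G$'s coins and of $\Pi$, if $a_t$ is not a mistake, then $a_t\in\Lb\setminus\{x_1,\ldots,x_t\}$. By (S1), $a_t=E(b,j)$ where $j$ is its index field, and the unique index-$j$ element of $\Lb$ is $E(b,j)$. Since $x_s=E(b,\Pi_s)$ and distinct indices give distinct elements, $a_t\notin\{x_1,\ldots,x_t\}$ forces $j\notin Q_t$. Thus the event ``$a_t$ is a non-mistake'' is contained in, and in fact equal to, the event ``$a_t=E(b,j)$ for some $j\notin Q_t$''.

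\textbf{Step 2 (the stopping identity).} Draw $t^*$ independently of everything else and condition on it. Then
\[
  \P_{G,\Pi,t^*}[\,a_{t^*}\text{ non-mistake}\,]
  =\frac1W\sum_{t=0}^{W-1}\P_{G,\Pi}[\,a_t\text{ non-mistake}\,]
  =1-\frac1W\,\E_{G,\Pi}[\mist_{<W}]\ge1-\frac mW .
\]
This uses only linearity of expectation and the definition of $\mist_{<W}$. No independence of $\Pi$ from $(b,H)$ enters, because $(b,H)$ is fixed throughout. One detail needs care: $a_t$ in the stopped experiment must have the same law as the $t$th output in the full game. It does, because $G$'s output at round $t$ depends only on its coins and on the prefix $(x_1,\ldots,x_t)$, and the game is feedback-free.

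\textbf{Step 3 (admissibility and query count).} Suppose the prefix $(\Pi_1,\ldots,\Pi_{W-1})$ is sampled without oracle queries, independently of $(b,H)$ and of $G$'s coins. The predictor then samples $t^*$ and the prefix with its own coins and requests the samples for $Q=Q_{t^*}$. Conditioned on these coins, $Q$ is a finite fixed set independent of $(b,H)$, as \cref{lem:s3-formal} requires. The lemma's samples are an unordered set of pairs $(i,H(b,i))$, but the realized ordering $\Pi_1,\ldots,\Pi_{t^*}$ is part of the predictor's own view. It can therefore reorder the samples into $(x_1,\ldots,x_{t^*})$ before feeding $G$. Alternatively, after conditioning on the realized prefix, that prefix can be hard-coded into a deterministic wrapper and randomized only over $G$'s coins.

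Finally, simulating rounds $0,\ldots,t^*$ costs at most $\sum_{i\le t^*}q_G(\lambda,i)\le T_W$ queries by the per-output cap. The only point needing care is this reordering and admissibility issue, that is, ensuring the ordered prefix carries no information about $(b,H)$ beyond the supplied samples. The rest is routine.
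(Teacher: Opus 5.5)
Your proposal is correct and follows essentially the same route as the paper: the pointwise (S1) identification of a non-mistake with a correct target point at an index outside $Q_t$, the uniform-stopping averaging identity $\P[a_{t^*}\text{ non-mistake}]=1-\E_{G,\Pi}[\mist_{<W}]/W$, which uses only independence of $t^*$, and the per-output cap summed over the simulated rounds $0,\ldots,t^*$. Your extra remark on reordering the unordered samples via their literal index fields matches how the paper handles that point in the proof of \cref{prop:precommitted-joint}.
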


\begin{proof}
For $0\le t<W$, let
\[
  Z_t
  \defeq
  \mathbf 1\!\left[
    a_t\in\Lb\setminus\{x_1,\ldots,x_t\}
  \right].
\]
Since every one of the first $W$ outputs is either a mistake or a non-mistake,
\[
  \mist_{<W}
  =W-\sum_{t=0}^{W-1}Z_t.
\]
Consequently,
\begin{align*}
  \P_{G,\Pi,t^*}[Z_{t^*}=1]
  &=\frac1W\sum_{t=0}^{W-1}\E_{G,\Pi}[Z_t]\\
  &=1-\frac1W
    \E_{G,\Pi}\!\left[
      \mist_{<W}(G,\Lb,\sigma_{b,\Pi})
    \right]\\
  &\ge1-\frac{m}{W}.
\end{align*}
On the event $Z_{t^*}=1$, property~\textup{(S1)} implies that $a_{t^*}=E(b,j)$ for a unique index $j$.  Since this output is not among $x_1,\ldots,x_{t^*}$, distinctness of the index sequence gives $j\notin Q_{t^*}$.  Thus every non-mistake selected by the random stopping rule is a correct target point fresh from the revealed sample set.

When the additional independence and sampler conditions in the statement hold, the predictor may reproduce the generator's state at round $t^*$ by simulating outputs $a_0,\ldots,a_{t^*}$.  Hence its oracle-query cost is at most
\[
  \sum_{i=0}^{t^*}q_G(\lambda,i)
  \le \sum_{i=0}^{W-1}q_G(\lambda,i)=T_W.
\]
\end{proof}

\begin{proposition}[Finite-horizon bound for precommitted schedules]
\label[proposition]{prop:precommitted-joint}
Let $\Pi_\lambda$ be any precommitted index schedule, let $W\ge1$, and let $0\le m<W$.  If $G$ uses at most $q_G(\lambda,i)$ oracle queries while producing output $i$, let $T_W$ be as in \cref{lem:random-stopping}.  Then
\[
  \P_{b,H}\!\left[
    \E_{G,\Pi}\!\left[
      \mist_{<W}(G,\Lb,\sigma_{b,\Pi})
    \right]\le m
  \right]
  \le
  \frac{T_W+1}{1-m/W}\,2^{-\lambda}.
\]
For a deterministic schedule, the expectation over $\Pi$ is omitted.
\end{proposition}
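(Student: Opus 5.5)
The plan is to combine the random-stopping identity of \cref{lem:random-stopping} with the fresh-point bound of \cref{lem:s3-formal}, using only precommitment to make the stopped experiment admissible.

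Let
\[
  \mathsf{Easy}\defeq\Bigl\{(b,H):\E_{G,\Pi}\bigl[\mist_{<W}(G,\Lb,\sigma_{b,\Pi})\bigr]\le m\Bigr\}.
\]
Define the stopped predictor $A$ as follows. It draws $t^*$ uniformly from $\{0,\ldots,W-1\}$ and samples the schedule prefix $(\Pi_{\lambda,1},\ldots,\Pi_{\lambda,t^*})$, both with its own coins and independently of $(b,H)$. It then receives the ordered samples $E(b,\Pi_{\lambda,s})$ for $s\le t^*$, simulates $G$ with fresh coins on these prefix histories, and outputs $a_{t^*}$. By \cref{lem:random-stopping}, for every fixed $(b,H)\in\mathsf{Easy}$ the conditional success probability is at least $1-m/W$. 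Averaging over $(b,H)$ and using nonnegativity gives
\[
  \P_{b,H,\mathrm{coins}}[\text{$A$ outputs }E(b,j),\ j\notin Q_{t^*}]\ \ge\ (1-m/W)\,\P_{b,H}[\mathsf{Easy}].
\]
Tonelli (\cref{fact:tonelli}) justifies exchanging the orders of integration.

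For the matching upper bound, condition on the realized pair $(t^*,\text{prefix})=(t,\pi)$. Precommitment makes this pair independent of $(b,H)$, so conditioning leaves $(b,H)$ uniform. The index set $Q=\{\pi_1,\ldots,\pi_t\}$ is then fixed and finite, and the order $\pi$ can be hard-coded into the algorithm. The resulting algorithm receives the set $S$ and relabels it in the order $\pi$ before feeding it to $G$. Because the order is recoverable from the indices in $S$ together with the hard-coded $\pi$, it meets the hypotheses of \cref{lem:s3-formal}. It makes at most $T_W$ queries, and $G$'s coins serve as its private coins. Hence its success probability is at most $(T_W+1)2^{-\lambda}$, which is the sharp form of \cref{lem:s3-formal} rather than the $(T+2)$ form. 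This bound is uniform in $(t,\pi)$, so averaging over the conditioning preserves it. Combining the two bounds and dividing by $1-m/W>0$, which is positive because $m<W$, gives the claim. For a deterministic schedule the same argument applies with the point-mass prefix.

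The main obstacle is the conditioning step. We must verify that conditioning on the prefix does not disturb the law of $(b,H)$, and that $G$'s coins stay independent of everything else. Independence of the scheduler tape from $(b,H)$ and from $\rho_G$, as required in \cref{def:precommitted-schedule}, supplies exactly this. A second point to check is measurability of the event $\mathsf{Easy}$; this is handled as in \cref{app:oracle-space}.
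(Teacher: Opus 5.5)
Your proposal is correct and follows the paper's proof essentially step for step. You obtain the lower bound $(1-m/W)\,\P_{b,H}[\text{low-mistake event}]$ from \cref{lem:random-stopping}; you then condition on the stopping round and the ordered schedule prefix, hard-code that prefix into a predictor making at most $T_W$ queries, apply the sharp $(T_W+1)2^{-\lambda}$ form of \cref{lem:s3-formal} uniformly over the conditioning, average, and divide by $1-m/W$.
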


\begin{proof}
Let $B_\lambda$ denote the event inside the probability on the left-hand side and set $\alpha=1-m/W>0$.  For every $(b,H)\in B_\lambda$, the random-stopping identity of \cref{lem:random-stopping} gives
\[
  \P[\text{the stopped output is a correct fresh point}\mid b,H]
  \ge\alpha.
\]
Averaging over $(b,H)$ therefore gives
\[
  \P[\text{the stopped output is a correct fresh point}]
  \ge \alpha\,\P_{b,H}[B_\lambda].
\]

For the reverse bound, condition on $t^*$ and on the ordered schedule prefix
\[
  (\Pi_{\lambda,1},\ldots,\Pi_{\lambda,t^*}).
\]
The associated sample-index set
\[
  Q_{t^*}
  =
  \{\Pi_{\lambda,1},\ldots,\Pi_{\lambda,t^*}\}
\]
is then finite and fixed.  Because the schedule and stopping round were chosen independently of $(b,H)$, this conditioning preserves exactly the independence hypothesis of \cref{lem:s3-formal}.  Hard-code the conditioned ordered prefix into a predictor, which reconstructs the ordered history by matching each scheduled index to the literal index field of its supplied sample and makes at most $T_W$ oracle queries.  Hence \cref{lem:s3-formal} bounds its conditional success probability by
\[
  (T_W+1)2^{-\lambda}.
\]
The bound is uniform over the conditioned stopping round and ordered prefix, so averaging over them preserves it.  Combining the lower and upper bounds and dividing by $\alpha$ proves the claim.
\end{proof}

\begin{proof}[Proof of \cref{thm:schedule-robust}]
Since $W_c=2^{c\lambda+O(1)}$,
\[
  T_{W_c}
  \le
  r_G(\lambda)\sum_{i=0}^{W_c-1}(i+1)^d
  \le
  r_G(\lambda)W_c^{d+1}
  =
  2^{c(d+1)\lambda+o(\lambda)}.
\]
Also,
\[
  1-\frac{m_c}{W_c}
  =\frac{m_c+2}{2(m_c+1)}
  >\frac12.
\]
Therefore \cref{prop:precommitted-joint} gives
\[
  \P_{b,H}\!\left[
    \E_{G,\Pi}[\mist_{<W_c}]\le m_c
  \right]
  \le
  2(T_{W_c}+1)2^{-\lambda}
  =
  2^{-[1-c(d+1)]\lambda+o(\lambda)}.
\]
Any fixed $\gamma<1-c(d+1)$ is therefore valid for all sufficiently large $\lambda$.

For a fixed pair $(G,\Pi)$, let
\[
  f_\lambda(H)
  =
  \P_b\!\left[
    \E_{G,\Pi}[\mist_{<W_c}]\le m_c
  \right].
\]
Tonelli's theorem (\cref{fact:tonelli}) and the preceding joint bound give $\E_H[f_\lambda]\le2^{-\gamma\lambda}$.  Markov's inequality (\cref{fact:markov}) yields
\[
  \P_H[f_\lambda\ge\lambda^{-2}]
  \le \lambda^2 2^{-\gamma\lambda}.
\]
The right-hand side is summable, so the first Borel--Cantelli lemma (\cref{fact:first-borel-cantelli}) implies that almost surely $f_\lambda<1$ for every sufficiently large $\lambda$.  Hence some seed lies outside the low-mistake event at each such length.

Uniform oracle machines and uniform scheduler machines form countable families.  The countable full-measure intersection (\cref{fact:countable-intersection}) therefore places the conclusion for all pairs $(G,\Pi)$ on one oracle event.  For each $G$, one may fix any admissible $c_G$, for example $c_G=1/[2(d+1)]$.
\end{proof}

\section{Why the exponent must depend on the generator}
\label[section]{app:exponent-necessity}

The dependence of the exponent on the generator is not merely a consequence of how \cref{prop:perseed-exp} is written.  Under a query bound polynomial in the round index, no common positive exponent can hold for every generator.

\begin{proposition}[No generator-independent exponent]
\label[proposition]{prop:no-universal-exponent}
For every constant $c_0>0$, there is a deterministic uniform oracle generator $G$ with a per-output query bound polynomial in the output index such that, almost surely over $H$, for every sufficiently large $\lambda$ and every $b\in\{0,1\}^{\lambda}$,
\[
  \mist(G,\Lb,\sigma_b)<2^{c_0\lambda}.
\]
Consequently, the constant in the exponential lower bound cannot be chosen uniformly over all polynomial-query generators.
\end{proposition}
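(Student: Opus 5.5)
The plan is to exhibit, for each $c_0>0$, an explicit stateful deterministic generator that learns the hidden seed by a brute-force scan spread over many rounds. Fix an integer $d\ge1$ with $1/(d+1)<c_0$. The generator $G_d$ is allowed $(i+1)^d+1$ oracle queries at output $i$, a bound polynomial in the index (indeed independent of $\lambda$). It may retain state across rounds (\cref{def:model}), and it keeps a pointer into the lexicographic list of $\{0,1\}^{\lambda}$. At each round $t\ge3$ it spends its budget of $(t+1)^d$ queries scanning candidate seeds $\beta$. For each $\beta$ it queries $H(\beta,1),H(\beta,2),H(\beta,3)$ and compares them with the revealed values $x_1,x_2,x_3$, which are present under $\sigma_b$ once $t\ge3$. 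The first $\beta$ matching all three is stored as $\hat b$, and the scan stops. Before $\hat b$ is found, $G_d$ emits an arbitrary string, which we count as a mistake. Afterwards, at round $t$ it outputs $\tuple{\lambda,t+1,H(\hat b,t+1)}$ at the cost of one query. Every output computation halts on every oracle and history, and the machine is a single uniform deterministic oracle TM.

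\textbf{Step 1 (uniqueness of the three-point fingerprint).} Let $D_\lambda$ be the event that some pair $\beta\ne\beta'$ in $\{0,1\}^{\lambda}$ has $H(\beta,i)=H(\beta',i)$ for $i=1,2,3$. A union bound over at most $2^{2\lambda}$ pairs gives $\P_H[D_\lambda]\le 2^{2\lambda}2^{-3\lambda}=2^{-\lambda}$, using the coordinate independence of \cref{app:oracle-space}. This is summable, so by the first Borel--Cantelli lemma (\cref{fact:first-borel-cantelli}) almost surely $D_\lambda$ fails for all sufficiently large $\lambda$. On that event, the found $\hat b$ equals $b$ for every target $b$. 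Every output after identification is then $E(b,t+1)$ at a fresh index $t+1>t$, which is valid under the canonical enumeration, so no further mistakes occur.

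\textbf{Step 2 (counting the burn-in).} The scan needs at most $3\cdot2^{\lambda}$ queries. Cumulative budget through round $t$ is at least $\sum_{i=3}^{t}(i+1)^d\ge c_d t^{d+1}-O(1)$. Hence identification happens by some round $t_\lambda=O\bigl(2^{\lambda/(d+1)}\bigr)$, uniformly in $b$. The total mistake count is then $\mist(G_d,\Lb,\sigma_b)\le t_\lambda+4=O\bigl(2^{\lambda/(d+1)}\bigr)$. This is below $2^{c_0\lambda}$ for all large $\lambda$ since $1/(d+1)<c_0$.

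The main point needing care is Step 1. A single index leaves about $2^{\lambda}$ colliding seed pairs, and two indices give only an $O(1)$ expected number of colliding pairs, which is not summable. Three fingerprint indices are exactly what makes the Borel--Cantelli argument go through. One must also check that the per-round budget, not cumulative queries, is what \cref{def:query-cost} constrains; the stateful scan respects this.

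The consequence then follows. Any putative universal exponent $c$ would be contradicted by $G_d$ with $c_0=c$, since the total mistakes, and hence $\mist_{<W}$, stay below $2^{c\lambda}$ for every $b$ at all large $\lambda$.
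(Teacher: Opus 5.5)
Your proof is correct. Its probabilistic core is the same as the paper's: pairwise distinctness of the three-coordinate signatures $\beta\mapsto(H(\beta,1),H(\beta,2),H(\beta,3))$, obtained from a union bound of order $2^{-\lambda}$ and the first Borel--Cantelli lemma. The difference is in how the generator spends its query budget.

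The paper's machine is memoryless. It emits a malformed string, making no queries, until the first round $\tau_\lambda$ with $(t+1)^d\ge 3\cdot2^\lambda+1$. From then on it repeats the full $3\cdot2^\lambda$-query scan inside every single output. Its burn-in is therefore $2^{\lambda/d+O(1)}$, and it needs $d>1/c_0$.

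You instead use the state retention permitted by \cref{def:model} to spread one scan over the per-output budgets. Since $\sum_{i\le t}(i+1)^d=\Theta(t^{d+1})$, the seed is identified by round $O(2^{\lambda/(d+1)})$, and $1/(d+1)<c_0$ suffices.

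This gains something the paper explicitly leaves open. Your $G_d$ satisfies $q(\lambda,i)\le 2(i+1)^d$, yet makes $O(2^{\lambda/(d+1)})$ mistakes on every target. So for this generator the conclusion of \cref{prop:perseed-exp} fails for every $c>1/(d+1)$. The admissible range $c<1/(d+1)$ there is thus sharp, up to the endpoint, for query bounds of degree $d$ in the round index.

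In exchange, the paper's construction is simpler to verify and does not rely on carrying state between outputs.
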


\begin{proof}
Choose an integer $d>1/c_0$.  Let $\mc U_\lambda$ be the event that the three-coordinate signatures
\[
  \beta\longmapsto
  \bigl(H(\beta,1),H(\beta,2),H(\beta,3)\bigr),
  \qquad \beta\in\{0,1\}^{\lambda},
\]
are pairwise distinct.  Let $C_\lambda$ count colliding unordered pairs of signatures.  Any fixed pair collides with probability $2^{-3\lambda}$, so Tonelli's theorem (\cref{fact:tonelli}) and Markov's inequality (\cref{fact:markov}) give
\[
  \P_H[\mc U_\lambda^{\mathsf c}]
  =\P_H[C_\lambda\ge1]
  \le\E_H[C_\lambda]
  =\binom{2^\lambda}{2}2^{-3\lambda}
  \le2^{-\lambda-1}.
\]
The first Borel--Cantelli lemma (\cref{fact:first-borel-cantelli}) implies that almost surely $\mc U_\lambda$ holds for every sufficiently large $\lambda$.

Define
\[
  \tau_\lambda
  \defeq
  \min\bigl\{t\ge3:(t+1)^d\ge3\cdot2^\lambda+1\bigr\}.
\]
Before round $\tau_\lambda$, the generator emits a fixed malformed string.  At every $t\ge\tau_\lambda$, it reads the first three values from the canonical history, queries $H(\beta,k)$ for every $\beta\in\{0,1\}^{\lambda}$ and $k\in\{1,2,3\}$, and searches for the unique matching seed $\widehat b$.  If it exists, the generator makes one additional query and outputs
\[
  E(\widehat b,t+1)=\tuple{\lambda,t+1,H(\widehat b,t+1)};
\]
otherwise it emits the malformed string.

This is one uniform machine with query bound
\[
  q_G(\lambda,t)\le(t+1)^d.
\]
Indeed, it makes no queries before $\tau_\lambda$ and at most $3\cdot2^\lambda+1\le(t+1)^d$ queries thereafter.  On $\mc U_\lambda$, the recovered seed is $\widehat b=b$ for every target, and the index $t+1$ has not yet been revealed under $\sigma_b$. Hence every output from round $\tau_\lambda$ onward is a non-mistake, simultaneously for all length-$\lambda$ targets.  Therefore
\[
  \mist(G,\Lb,\sigma_b)
  \le\tau_\lambda
  =2^{\lambda/d+O(1)}
  <2^{c_0\lambda}
\]
for all sufficiently large $\lambda$.  The almost-sure eventual occurrence of $\mc U_\lambda$ completes the proof.
\end{proof}

\Cref{prop:no-universal-exponent} establishes only that the exponent must depend on $G$; it does not claim that the admissible threshold $c<1/(d+1)$ in \cref{prop:perseed-exp} is optimal.

\section{A worked execution of the zero-mistake generator}
\label[section]{app:worked-generation}

Run \cref{alg:zero-mistake} on the oracle window of \cref{ex:oracle-window}, with target $L_{00}$ under its canonical enumeration.  Write
\[
\begin{aligned}
  e_1&\defeq E(00,1)=\tuple{2,1,10},
  &\qquad
  e_2&\defeq E(00,2)=\tuple{2,2,01},\\
  e_3&\defeq E(00,3)=\tuple{2,3,11},
  &
  e_4&\defeq E(00,4)=\tuple{2,4,00}.
\end{aligned}
\]
and $\mathsf S_2=\{00,01,10,11\}$.  On the recurrence event of \cref{lem:construction}, define
\[
  j_\star
  \defeq
  \min\{j\ge5:H(00,j)=H(01,j)\},
  \qquad
  y_\star
  \defeq H(00,j_\star)=H(01,j_\star).
\]
The first five calls are as follows.

\begin{table}[H]
\centering
\footnotesize
\caption{Transcript for target $L_{00}$ using the oracle window of \cref{ex:oracle-window}.}
\label[table]{tab:worked-generation}
{\renewcommand{\arraystretch}{1.00}
\setlength{\tabcolsep}{4pt}
\begin{tabularx}{\textwidth}{@{}
  c
  >{\raggedright\arraybackslash}p{.23\textwidth}
  >{\raggedright\arraybackslash}p{.10\textwidth}
  >{\raggedright\arraybackslash}p{.13\textwidth}
  >{\raggedright\arraybackslash}X@{}}
\toprule
$t$ & Revealed history $F_t$ & Revealed indices $I_t$
& Consistent seeds $B_t$ & Increasing scan and returned output \\
\midrule
$0$
& $\varnothing$
& $\varnothing$
& $\mathsf S_2$
& At $j=1$ the value set is $\{00,10,11\}$, and at $j=2$ it is $\{00,01,10\}$; neither is a
singleton.  At $j=3$ all rows give $11$, so $a_0=e_3$. \\
$1$
& $\{e_1\}$, with $x_1=e_1$
& $\{1\}$
& $\{00,01\}$
& The first eligible index is $j=2$, where both surviving rows give $01$.  Hence $a_1=e_2$. \\
$2$
& $\{e_1,e_2\}$, with $x_2=e_2$
& $\{1,2\}$
& $\{00,01\}$
& The first eligible index is $j=3$, where both surviving rows give $11$.  Hence $a_2=e_3$. \\
$3$
& $\{e_1,e_2,e_3\}$, with $x_3=e_3$
& $\{1,2,3\}$
& $\{00,01\}$
& At $j=4$ the surviving values are $\{00,10\}$, so the scan continues beyond the displayed
window and returns $a_3=\tuple{2,j_\star,y_\star}$. \\
$4$
& $\{e_1,e_2,e_3,e_4\}$, with $x_4=e_4$
& $\{1,2,3,4\}$
& $\{00\}$
& Only the target seed remains.  Thus the first eligible index, $j=5$, automatically has a
singleton value set, and $a_4=E(00,5)=\tuple{2,5,H(00,5)}$. \\
\bottomrule
\end{tabularx}
}
\end{table}

Every displayed output is valid and unrevealed.  The repetition $a_2=a_0=e_3$ is allowed until $e_3$ is revealed; the score in \cref{def:model} forbids stale examples, not repetition of the generator's own outputs.  The finite window determines the scans only through index $4$; existence of $j_\star$ comes from the recurrence proved in \cref{lem:construction}.

\end{document}